\documentclass{article}
\usepackage[utf8]{inputenc}
\usepackage[margin=1in]{geometry}
\usepackage{amsfonts,amsmath,amssymb,amsthm,mathtools,stmaryrd,tikz}
\usetikzlibrary{positioning}
\usepackage{dsfont}

\usepackage{hyperref}
\usepackage[dvipsnames]{xcolor}
\hypersetup{
 colorlinks=true,
 urlcolor=MidnightBlue,
 linkcolor=MidnightBlue,
 citecolor=MidnightBlue,
 unicode
}	

\usepackage{tocloft}
\usepackage{dirtytalk}
\usepackage[nottoc]{tocbibind} 
\usepackage{xparse}
\usepackage{enumitem}
\setlist{listparindent=\parindent,parsep=0pt,itemsep=1em}
\setlist[itemize]{label=$-$,noitemsep}
\setlist[enumerate]{itemsep=1mm}
\setlist[description]{leftmargin=\parindent}

\usepackage[lined,boxed,commentsnumbered, ruled]{algorithm2e}

\usepackage[capitalise,nameinlink,noabbrev]{cleveref}
\usepackage{algpseudocode, graphicx, times, thmtools, thm-restate}
\usepackage[noadjust]{cite}
\usepackage[full]{complexity}
\usepackage{tocloft}
\definecolor{betterYellow}{RGB}{255,255,0}

\declaretheorem[name=Theorem, numberwithin=section]{theorem}
\declaretheorem[name=Corollary, sibling=theorem]{corollary}
\declaretheorem[name=Proposition, sibling=theorem]{proposition}
\declaretheorem[name=Lemma, sibling=theorem]{lemma}

\declaretheorem[name=Observation, sibling=theorem]{observation}

\declaretheorem[name=Notation, sibling=theorem]{notation}
\theoremstyle{definition}

\declaretheorem[name=Definition, sibling=theorem]{definition}

\theoremstyle{remark}

\newcommand{\vars}{\ensuremath{\mathsf{Vars}}}

\newcommand{\depth}{\ensuremath{\mathsf{depth}}}

\newcommand{\size}{\ensuremath{\mathsf{size}}}

\renewcommand{\include}{\input}

\newcommand{\mycomment}[1]{}
\newcommand{\MIN}{\Min}

\newcommand{\newprob}[2]{\newcommand{#1}{{\text{\upshape\scshape #2}}\xspace}}
\newprob{\Pigeon}{Pigeon}
\newprob{\WeakPigeon}{Weak-Pigeon}
\newprob{\Nash}{Nash}
\newprob{\PE}{PE}
\newprob{\GPLS}{GPLS}
\newprob{\EOL}{End-Of-Line}
\newprob{\Factoring}{Factoring}
\newprob{\Blichfeldt}{Blichfeldt}
\newprob{\SVP}{SVP}
\newprob{\EKR}{Erd\"{o}s-Ko-Rado}
\newprob{\Ramsey}{Ramsey}
\newprob{\LongChoice}{Long-Choice}
\newprob{\Sunflower}{Sunflower}
\newprob{\FF}{FF}
\newprob{\Search}{Search}
\newprob{\Refl}{Ref}
\newprob{\Sat}{Sat}
\newprob{\Proof}{Proof}
\newprob{\Iter}{Iter}
\newprob{\rWPHP}{rWPHP}
\newprob{\TreeIter}{TreeIter}
\newprob{\LOP}{LOP}
\newprob{\sRA}{StrongAvoid}
\newprob{\wRA}{Avoid}
\newprob{\sRAlong}{strongRangeAvoidance}
\newprob{\wRAlong}{RangeAvoidance}
\newprob{\SoD}{SoD}
\newprob{\SoDlong}{SourceofDag}
\newprob{\USoD}{USoD}
\newprob{\USoDlong}{UnmeteredSourceofDag}
\newprob{\Shattering}{Shattering}
\newprob{\CPLS}{CPLS}
\newprob{\SoL}{SoL}
\newprob{\SoPL}{SoPL}
\newprob{\rPHP}{rPHP}
\newprob{\GIter}{GIter}
\newprob{\GSoPL}{GSoPL}
\newprob{\Empty}{Empty}
\newprob{\Min}{LeastNumber}

\newclass{\SOD}{SOD}
\newclass{\PWPP}{PWPP}
\newclass{\PLC}{PLC}
\newclass{\UPLC}{UPLC}
\newclass{\EOPL}{EOPL}
\newclass{\CLS}{CLS}
\newclass{\TFPH}{TFPH}
\newclass{\TFZPP}{TFZPP}
\newclass{\TF}{TF}
\newclass{\F}{F}
\newclass{\APEPP}{APEPP}
\newclass{\PEPP}{PEPP}
\newclass{\SOPL}{SOPL}
\newclass{\SOL}{SOL}
\newclass{\promise}{pr}
\newcommand{\pr}[1]{\promise#1}

\newclass{\Res}{Res}
\newclass{\Rev}{RevRes}
\newclass{\Circ}{uCircRes}
\newclass{\PK}{PK}
\newclass{\LK}{LK}
\newclass{\TreeRes}{TreeRes}
\newclass{\SA}{uSA}
\newclass{\wSA}{wSA}
\newclass{\resc}{Res(\C)}
\newclass{\sAvoid}{StrongAvoid}
\newclass{\LOPclass}{LOP}
\newclass{\MINclass}{LeastNumber}

\newclass{\CNF}{CNF}

\NewDocumentCommand\bra{mo}{
  \IfNoValueTF{#2}
  {[\![ #1]\!]}
 {[\![ #1]\!]_{#2}}}

\NewDocumentCommand\pe{o}{
  \IfNoValueTF{#1}
  {\tilde{\mathbb{E}}}
 {\tilde{\mathbb{E}}\left[#1\right]}}

\newcommand{\bn}{\{0,1\}^n} 
\DeclareMathOperator{\Ord}{Ord}

\newcommand{\rest}{\!\!\upharpoonright\!}
\newcommand{\setm}{\!\setminus\!}

\title{Separations above TFNP from Sherali-Adams Lower Bounds}

\begin{document}

\newgeometry{margin=1.3in,top=1.7in,bottom=1in}

\begin{center}
{\huge Separations above TFNP from Sherali-Adams Lower Bounds}
\\[9mm] \large

\setlength\tabcolsep{1em}
\begin{tabular}{cccc}
Noah Fleming&
Anna G\'al&
Deniz Imrek&
Christophe Marciot\\[-1mm]
\small\slshape Lund \& Columbia &
\small\slshape UT Austin &
\small\slshape UT Austin &
\small\slshape Lund University 
\end{tabular}

\vspace{9mm}

\large

\vspace{5mm}

\normalsize
\end{center}

\begin{abstract}
    Unlike in $\TFNP$, for which there is an abundance of problems capturing natural existence principles which are incomparable (in the black-box setting), Kleinberg et al.~\cite{KleinbergKMP21} 
    observed that many of the natural problems considered so far in the second level of the total function polynomial hierarchy $(\TF\Sigma_2)$ reduce to the Strong Avoid problem.
    In this work, we prove that the Linear Ordering Principle does not reduce to Strong Avoid in the black-box setting, exhibiting the first $\TF\Sigma_2$ problem that lies outside of the class of problems reducible to Strong Avoid.

    The proof of our separation exploits a connection between total search problems in the polynomial hierarchy and proof complexity, recently developed by Fleming, Imrek, and Marciot \cite{FlemingIM25}. In particular, this implies that to show our separation, it suffices to show that there is no small proof of the Linear Ordering Principle in a $\Sigma_2$-variant of the Sherali-Adams proof system. To do so, we extend the classical pseudo-expectation method to the $\Sigma_2$ setting, showing that the existence of a $\Sigma_2$ pseudo-expectation precludes a $\Sigma_2$ Sherali-Adams proof. The main technical challenge is in proving the existence of such a pseudo-expectation, we manage to do so by solving a combinatorial covering problem about permutations. We also show that the extended pseudo-expectation bound implies that the Linear Ordering Principle cannot be reduced to any problem admitting a low-degree Sherali-Adams refutation.

\end{abstract}
\vspace{8mm}

\setlength{\cftbeforesecskip}{0pt}
\renewcommand\cftsecfont{\mdseries}
\renewcommand{\cftsecpagefont}{\normalfont}
\renewcommand{\cftsecleader}{\cftdotfill{\cftdotsep}}
\setcounter{tocdepth}{1}
\tableofcontents

\thispagestyle{empty}
\setcounter{page}{0}

\newpage
\restoregeometry

\section{Introduction}

In recent years total search problems in the second level of the polynomial hierarchy $(\TF\Sigma_2)$ have received considerable attention. A substantial reason for this is that this class contains a variety of important explicit construction problems, such as finding truth tables of functions with high circuit complexity, pseudo-random generators, rigid matrices, time-bounded Kolmogorov random strings, and extractors. The totality of these problems is witnessed by union-bound type arguments, which can be formalized as reductions to the $\TF\Sigma_2$ problem $\wRA$ \cite{Korten21, KleinbergKMP21}. 
\paragraph{Avoid.} Given $C: [2^n] \rightarrow [2^{n+1}]$ output $y \in [2^{n+1}]$ such that for every $x \in [2^n]$, $C(x) \neq y$.\\
\mbox{}\\
\noindent By developing new algorithms for this problem, as well as the harder Linear Ordering Principle, researchers have obtained state-of-the-art circuit lower bounds \cite{Li24, ChenHRMSO24, KortenP24} and data structure lower bounds \cite{GuruswamiLW25,KortenPI25}. 

\paragraph{Linear Ordering Principle (\LOP).} Given $\prec: [2^n] \times [2^n] \rightarrow \{0,1\}$, output either\vspace{-0.4em}
\begin{itemize}
  \item $x$ such that for every $y \neq x$, $x \prec y$, or \hfill \emph{(Minimal element)}
  \item $x \neq y \neq z$ such that either (i) $x \prec x$, or (ii) $x \not \prec y$ and $y \not \prec x$, or (iii) $x \prec y, y \prec z$ but $x \not \prec z$. \\ 
  \mbox{}\hfill \emph{(Linear ordering violation)}
\end{itemize}

Kleinberg et al.~\cite{KleinbergKMP21} initiated the study of $\TF\Sigma_2$ 
as a class, introducing many of the aforementioned explicit construction problems along with a number of other natural search problems, including 
$\mathsf{Ramsey}$-$\mathsf{Erdos}$$\hspace{-10pt}$ \textsf{\H{}}$\hspace{3.5pt}$ $\mathsf{Completion}$ which captures the proof that the $n$-th Ramsey number is at least $2^{n/2}$. In doing so they observed that, 
while $\TFNP$ has a variety of incomparable subclasses capturing various existence principles, every $\TF\Sigma_2$ problem that they considered admitted a reduction to the following strong variant of the $\wRA$ problem.

\paragraph{Strong Avoid.} Given $C: [2^n] \rightarrow [2^n+1]$ output $y \in [2^n+1]$ such that for every $x \in [2^n]$, $C(x) \neq y$.\\
\mbox{}\\
\cite{KleinbergKMP21} raised the question of whether there are any natural problems in $\TF\Sigma_2$ which do not reduce to the problem $\sRA$.
The main contribution of our work is to give the first $\TF\Sigma_2$ problem which is not contained within $\sAvoid^{dt}$, the class of problems which admit efficient black-box reductions to $\sRA$. This answers the question in the black-box setting. Note that a separation in the Turing Machine setting would separate $\P$ from $\Sigma_2^P$. 

\begin{theorem}
\label{thm:LOPnotsRA}
    $\LOP \not \in \sAvoid^{dt}$.
\end{theorem}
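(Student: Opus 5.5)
The plan follows the route sketched in the abstract: translate the black-box reduction into a proof-complexity statement and then prove a lower bound for that proof system.

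\textbf{Step 1: reduce to a $\Sigma_2$ Sherali--Adams lower bound.} By the connection of Fleming, Imrek and Marciot~\cite{FlemingIM25}, a $\TF\Sigma_2$ problem lies in $\sAvoid^{dt}$ exactly when its $\Sigma_2$-formula encoding has a low-degree (polylogarithmic in $2^n$) refutation in a $\Sigma_2$-variant of Sherali--Adams (unary or weighted, as appropriate). Here the existential ``witness'' variables may be used, and the universally quantified structure is handled by the strong pigeonhole/avoid axioms. So it suffices to show that $\LOP$ on $N=2^n$ elements has no $\Sigma_2$ Sherali--Adams refutation of degree $\mathrm{polylog}(N)$.

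\textbf{Step 2: a $\Sigma_2$ pseudo-expectation criterion.} Next I would define a $\Sigma_2$ pseudo-expectation. This is a linear functional $\pe$ on polynomials of degree at most $d$ over the ordering variables $x_{ij}$ (meaning $i\prec j$). It must satisfy $\pe[1]=1$, be nonnegative on products of literals of degree at most $d$, and vanish on multiples of the ordering axioms (irreflexivity, totality, transitivity). In addition it must be compatible with the $\Sigma_2$ solution condition: for every candidate minimum $x$, the term asserting that $x$ is minimal should be ``refuted by a small witness'' $y$ with $y\prec x$, and $\pe$ should respect that witness choice. Then I would prove the soundness direction: if such an $\pe$ exists, applying $\pe$ to both sides of a purported refutation identity gives $0\ge 1$ or a negative value, a contradiction. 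This mirrors the classical argument, but the $\Sigma_2$ axioms have to be checked term by term.

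\textbf{Step 3: construct $\pe$ (main obstacle).} The natural first attempt is to take $\pe[\prod \ell]$ to be the probability that the product of literals holds under a uniformly random linear order, restricted to orders in which the mentioned elements are not globally minimal. This uses the fact that any degree-$d$ monomial touches at most $2d$ elements, while $N$ is huge. The difficulty is to do this consistently across all monomials of degree at most $d$. For each small set $S$ of elements we need a distribution over partial orders on $S$ in which every $x\in S$ has a predecessor, and these distributions must agree on overlaps. This cannot be done with a single global order, because the set of all $N$ elements has a minimum. So I would instead build a locally consistent family of distributions over ``cyclic-free but unbounded below'' partial orders. This amounts to a covering problem about permutations: choose, for each small set, a collection of permutations of a slightly larger set such that every element in the small set has a predecessor outside it, with marginals matching on intersections. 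I would try a fixed-point or averaging construction over random embeddings into $\mathbb{Z}$-like orders, truncated at a window of size $\mathrm{poly}(d)$. I expect the verification that marginals are consistent while respecting the $\Sigma_2$ witness constraints to be the crux.

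Combining Steps 2 and 3 rules out degree-$\mathrm{polylog}(N)$ $\Sigma_2$ Sherali--Adams refutations, and Step 1 then yields $\LOP\notin\sAvoid^{dt}$.
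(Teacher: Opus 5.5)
\textbf{There is a genuine gap.} Your outer structure matches the paper's: the characterization of \cite{FlemingIM25}, soundness of $\Sigma_2$-pseudo-expectations, and then a construction of one. But Step 2 misses what the $\Sigma_2$ setting actually demands, and Step 3 contains no argument.

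\textbf{Step 2 omits the weakening rule.} The extra power of $\Sigma_2$-Sherali--Adams is the $\Sigma_2$-weakening rule: a proof may first replace any axiom by an arbitrary DNF that it implies. So a $\Sigma_2$-pseudo-expectation must satisfy $\pe[W\mathcal{J}]\ge 0$ for every low-degree DNF $W$ implied by an axiom and every low-degree conical junta $\mathcal{J}$. For $M_1=\bigvee_{j\ne 1}x_{j,1}$, this covers every degree-$d$ DNF, with terms of arbitrary shape and support, that accepts all total orders in which $1$ is not minimal. Your condition about ``respecting the witness choice'' for the term asserting that $x$ is minimal does not express this. That term has degree $n-1$, and the rule quantifies over implied DNFs, not over witnesses. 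Hence your soundness argument does not apply to proofs that first weaken $M_1$.

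\textbf{The premise of Step 3 is false.} You claim a single global order cannot work, but it can. A DNF is encoded as $\sum_t t-1$, so an order accepted by several terms is over-counted. Under the uniform distribution over total orders on $n$ elements, $\pe[M_i]=(n-3)/2\ge 0$, even though every order violates some $M_i$. The paper uses exactly this $\pe$ for all weakenings at once.

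Your locally consistent ``unbounded-below'' family is, in an averaged sense, what the uniform distribution already gives: the minimum of a small set has many predecessors outside it with high probability. As you first state it, the family is impossible, since a partial order on a finite $S$ has a minimal element. The repaired version, with predecessors in a larger set, is never defined.

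\textbf{The missing crux is a covering statement.} Take a weakening $W$ of $M_1$ and a term $t$ placing $1$ first in its support. Then the extra occurrences $\sum_{i\ge 2}(i-1)|S_i|$ of orders accepted by several terms of $Wt$ must be at least the number $|R|$ of orders consistent with $t$ that $W$ rejects. Without this, a weakening whose terms cover the orders with $1$ not first essentially disjointly would make $\pe[W]<0$. Such coverings do exist once the support size is about $n/2$, so the low degree must genuinely be used.

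The paper proves this as follows:
\begin{enumerate}
\item It normalizes $W$ and $t$ into terms $\bra{S}[\pi]$ of fixed support size containing $1$, and merges them into one DNF $N'$.
\item For each rejected order $1z$ and each $i\notin T^*$, some term of $N'$ is $z$-good for $i$.
\item It counts hitting triples in the array $H_z$; each yields $12$ extra occurrences.
\item It amortizes over the at most $(n-1)(n-2)(n-3)$ orders $z'$ producing the same over-counted orders.
\end{enumerate}
This gives at least $|R|$ extra occurrences when $\ell+h\le n/100$. That counting argument is the substance of the proof, and your proposal does not contain it or any substitute for it.
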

This complements the work of Korten and Pitassi \cite{KortenP24}, who show a separation in the other direction. 
Using our technique, we are also able to show that $\LOP$ does not reduce to the $\TF\Sigma_2$ problem $\Min$, introduced by Thapen \cite{Thapen2024}.

\paragraph{Least Number.} Given $f:\bn\rightarrow \{0,1\}$, output either $\bot$ if $f(x)=0$ for all $x$, or $x$ such that $f(x)=1$ and for all $y<x, f(y)=0$. 
\mbox{}\\

\begin{theorem}
\label{thm:LOPvsLN}
    $\LOP \not \in \MINclass^{dt}$. 
\end{theorem}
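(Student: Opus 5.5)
The plan follows the route the abstract suggests: go through the proof-complexity characterization of \cite{FlemingIM25} and then prove a $\Sigma_2$ Sherali--Adams lower bound for $\LOP$.

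\textbf{Step 1: reduce to a proof-complexity statement.} By the characterization of \cite{FlemingIM25}, membership of a $\TF\Sigma_2$ problem in $\MINclass^{dt}$ is equivalent to the existence of short, low-width proofs of the corresponding unsatisfiable formula in a $\Sigma_2$ proof system. As with $\sAvoid^{dt}$, this system is at most as strong as the $\Sigma_2$-variant of Sherali--Adams. Concretely, I would observe that $\Min$ itself has a low-degree Sherali--Adams refutation. The reason is that the least number principle is refuted by summing the constraints \[ f(x)\prod_{y<x}(1-f(y)) \] against the telescoping identity \[ \sum_x f(x)\prod_{y<x}(1-f(y)) = 1-\prod_x(1-f(x)). \] The degree is large, but the identity is a legitimate linear combination in the unary/$\Sigma_2$ encoding used by \cite{FlemingIM25}, where the $\Sigma_2$ clauses are over the $\forall$-quantified witnesses. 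Therefore $\Min$ is a problem ``admitting a low-degree Sherali--Adams refutation'' in the sense of the abstract, and the general theorem that $\LOP$ does not reduce to any such problem applies.

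\textbf{Step 2: compose the reduction.} Suppose a low-depth decision-tree reduction from $\LOP$ to $\Min$ existed. Substituting the decision trees into the Sherali--Adams refutation of $\Min$ yields a $\Sigma_2$ Sherali--Adams refutation of $\LOP$ with degree $\mathrm{polylog}(N)$ and quasipolynomial size. I expect this step to be routine, since Sherali--Adams is closed under decision-tree substitution: each tree-leaf indicator is a product of literals of degree equal to the tree depth.

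\textbf{Step 3: rule out such a refutation.} I would contradict this using the $\Sigma_2$ pseudo-expectation for $\LOP$ that the paper constructs for Theorem~\ref{thm:LOPnotsRA}. That pseudo-expectation is built from a distribution over permutations, which are total orders, and it is valid for all degree-$d$ constraints even after the $\Sigma_2$ witnesses are plugged in.

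The main obstacle is making sure that Step 1 lands in exactly the proof system for which the pseudo-expectation bound holds. In particular, the $\Min$ refutation has to have low degree with respect to the $\exists$-variables, and only the $\forall$-part may have high degree. If the telescoping refutation turns out to be too high-degree, I would instead argue directly with a pseudo-expectation on the composed instance. In that approach, one answers the $\Min$ oracle queries by sampling a random permutation and fixing the $\Min$ input consistently, so that the least element is always forced to correspond to a non-solution of $\LOP$. This again relies on the permutation covering argument.
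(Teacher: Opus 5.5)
Your overall plan is the paper's: a low-degree Sherali--Adams refutation of $\Min$, composed with the reduction, then contradicted by the size-independent $\Sigma_2$-pseudo-expectation for $\LOP$ from Theorem~\ref{thm:PE}. However, Step~1 does not supply the refutation the rest of the argument needs.

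Your telescoping refutation multiplies the axiom for output $i$, namely $-x_i+\sum_{j<i}x_j\ge 0$, by the conjunct $\prod_{j<i}(1-x_j)$. Its degree is therefore as large as the number of variables of $F_{\Min}$. You note this yourself, but then conclude that $\Min$ admits a \emph{low}-degree refutation, which does not follow.

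The escape you propose is not available. You suggest that only the $\exists$-part must have low degree while the $\forall$-part may be high. But in $F_{\Min}$ the universally quantified witnesses have already been absorbed into the disjunctions, and every variable is an input bit. The degree of a proof is the degree of the products $D_i'\mathcal{J}_i$, multipliers included. After composing with a reduction of size $s$ and depth $d$, your refutation would have degree about $d\cdot s$. That is far above the $n/400$ threshold of Theorem~\ref{thm:PE}, so you get no contradiction.

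The missing idea is to trade size for degree, not degree for size. The axioms of $\Min$ are already linear, and
\[
\sum_{i=1}^n x_i-1+\sum_{i=0}^{n-1}2^i\Big(-x_{n-i}+\sum_{j<n-i}x_j\Big)=-1
\]
is a degree-$1$ refutation. Its exponentially large coefficients make its unary size exponential, which is harmless because the pseudo-expectation bound holds irrespective of size.

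Step~2 is also less routine than you suggest. Substituting $f$ into a refutation of $\Min$ refutes the formula with axioms $\bigvee_c\overline{R}_{b,c}\circ f$. When $g_b$ is not constant, such an axiom is not implied by any single axiom of $\LOP$, so you do not yet have a refutation of a $\Sigma_2$-weakening of $\LOP$. The paper fixes this by passing through the reduced problem $R(f,g)$:
\begin{itemize}
\item each axiom $\overline{G}_{b,y}\vee\bigvee_c\overline{R}_{b,c}\circ f$ is implied by the $\LOP$ axiom for output $y$, which is exactly the correctness of the reduction;
\item multiplying it by $G_{b,y}$ removes $\overline{G}_{b,y}$;
\item summing over $y$, using $\sum_y G_{b,y}=1$, recovers the composed refutation at degree $d(d'+2)$.
\end{itemize}
This is Lemma~\ref{lem:factorization} and Proposition~\ref{prop:SA_proof_reduced_problem}.

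You should also drop two claims: that membership in $\MINclass^{dt}$ is characterized by a system ``at most as strong as'' $\Sigma_2$-$\SA$, and that the composed refutation has quasipolynomial size. Together with Theorem~\ref{thm:wRACharact}, these would place $\Min$ in $\sAvoid^{dt}$, which the paper leaves open, and the argument needs only degree. Your fallback of a direct pseudo-expectation on the composed instance is too vague to count as an argument.
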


Our separations, along with those known already in the literature, are represented pictorially in \autoref{fig:results}.

    \begin{figure}[h]
       \begin{center}
           \begin{tikzpicture}

                \draw[->, thick, color=black!60, very thick] (-0.2,-0.3) -- (-1.575,1.5);
                \draw[->, thick, color=black!60, very thick] (0.2,-0.3) -- (2.4,3.3);
                \draw[->, thick, color=black!60, very thick] (3,0.3) -- (3,3.3);
                \draw[->, thick, color=black!60, very thick] (-2.025,2.1) -- (-2.9,3.3);

                \draw[->, dotted, color=red!50!purple, very thick] (3.2,3.3) -- (3.2,0.3);
                \draw[->, dotted, color=red!50!purple, very thick] (1.7,3.4) to[out=190,in=-10] (-2.5,3.4);
                \draw[->, dotted, color=red!50!purple,  very thick] (0.5,-0.6) -- (2.4,0);

                \draw[->, dashed, thick, color=red!50!purple, very thick] (-2.7,3.3) -- (-1.825, 2.1);
               \draw[->, dashed, thick, color=red!50!purple, very thick] (-2.5,3.8) to[out=10,in=170](1.7,3.8);

               \fill[draw=black!40!white, very thick, rounded corners, fill=Cerulean!10!white] (-0.5, -0.9) rectangle (0.5, -0.3) node[pos=.5] {~$\FNP~$};
               \fill[draw=black!40!white, very thick, rounded corners, fill=Cerulean!10!white] (2.4, -0.3) rectangle (3.6, 0.3) node[pos=.5] {~$\wRA~$};
               \fill[draw=black!40!white, very thick, rounded corners, fill=Cerulean!10!white] (1.7, 3.3) rectangle (4.3, 3.9) node[pos=.5] {~$\sRA~$};
               \fill[draw=black!40!white, very thick, rounded corners, fill=Cerulean!10!white] (-3.5, 3.3) rectangle (-2.5, 3.9) node[pos=.5] {~$\LOP~$};
               \fill[draw=black!40!white, very thick, rounded corners, fill=Cerulean!10!white] (-2.8, 1.5) rectangle (-0.2, 2.1) node[pos=.5] {~$\Min~$};
           \end{tikzpicture}
       \end{center}
       \caption{Relationships of some $\TF\Sigma_2^{dt}$ classes. A black arrow from a class $A$ to a class $B$ means that $A\subseteq B$ (\cite{KleinbergKMP21}). A dashed or dotted arrow from a class $A$ to a class $B$ means that $A\not\subseteq B$. The dashed separations are proved in this paper. \cite{KortenP24} proves that $\sRA$ does not reduce to $\wRA$ and $\LOP$. \cite{FlemingGRJLSY25} proves that $\wRA^{dt}$ does not contain all of $\TFNP^{dt}$. This implies that none of the other problems in the diagram reduces to $\wRA$.}
       \label{fig:results}
    \end{figure}

\subsection{Technical Highlights}

To prove \autoref{thm:LOPnotsRA} we make use of a connection between $\TF\Sigma_2^{dt}$ (where the superscript $dt$ denotes black-box reductions) and proof complexity, developed in \cite{FlemingIM25}. In doing so we provide the first proof of a  
separation between black-box $\TF\Sigma_2$ classes 
that uses proof complexity, confirming that this connection can indeed be useful. In particular, \cite{FlemingIM25} shows that black-box reductions to $\sRA$ are equivalent to efficient proofs in a $\Sigma_2$-variant of the Sherali-Adams proof system (see \hyperref[def:uSAproofsyst]{Definition~\ref*{def:uSAproofsyst}}) which equips it with a $\Sigma_2$-weakening step, extending the weakening rule for resolution to depth-$2$ formulas, as follows. 
\begin{definition}
\phantomsection\label{def:weakening}
    Let $D$ be a DNF formula, a \emph{$\Sigma_2$-weakening} of $D$ is a collection of DNFs $\{D_i\}_{i \in [m]}$ such that $D \implies D_i$ for every $i \in [m]$. A $\Sigma_2$-weakening of a collection of DNFs is a union of $\Sigma_2$-weakening of those DNFs.
\end{definition}

This connection with proof complexity reduces proving \autoref{thm:LOPnotsRA} to showing that the unsatisfiable formula which encodes the totality of $\LOP$
does not have efficient $\Sigma_2$-Sherali-Adams proofs (\hyperref[def:uSAproofsyst]{Definition~\ref*{def:uSAproofsyst}}).

\subsubsection*{Pseudo-Expectations}
To prove our Sherali-Adams lower bound, we extend the well-developed pseudo-expectation technique for Sherali-Adams proofs to the $\Sigma_2$ setting. 
A pseudo-expectation is an object which appears to be a probability distribution over satisfying assignments to our (unsatisfiable) formula when examining low-degree marginals. The existence of a pseudo-expectation precludes a low-degree Sherali-Adams proof. A $\Sigma_2$-pseudo-expectation for $\LOP$ is a collection of pseudo-expectations, one for each set of weakenings of the constraints of $\LOP$. See \hyperref[def:pseudoexp]{Definition~\ref*{def:pseudoexp}} for details. The main technical challenge is to construct such a $\Sigma_2$-pseudo-expectation. While Dantchev et al.~\cite{DantchevMR09} showed that $\LOP$ itself has a high-degree pseudo-expectation, it is not clear how to extend their argument to all $\Sigma_2$-weakenings. 

\subsubsection*{A Covering Problem of Permutations}
We observe that constructing a $\Sigma_2$-pseudo-expectation requires us to solve the following covering problem.
Let $\Ord$ be the set of all total orders on the set of numbers $\{1,\ldots, n\}$, and let $\Ord^{*1}\subseteq\Ord$ be the set of total orders for which at least one number comes before the element 1. For a fixed $d\leq n$, a set $S \subseteq [n]$ of size $d$ and an order $\sigma$ on $S$, 
let $\cal{C}_{S, \sigma}= \{\pi\in \Ord\mid \pi\text{ induces the order $\sigma$ on $S$}\}$ be the set of all total orders on $[n]$ that are consistent with $\sigma$ on $S$. Is there a set of strictly less than $d!$ pairs $(S_i, \sigma_i)$, with each $S_i$ of size $d$ such that $\Ord^{*1}$ is contained in
the union $\bigcup_i \mathcal{C}_{S_i, \sigma_i}$?

Note that one can cover the set of all total orders with exactly $d!$ such collections $\cal{C}_{S, \sigma}$.
The question is thus "can we do better if we do not need to cover orders starting with 1?". One can extract from the lower bound of Dantchev et al.~\cite{DantchevMR09} that for $d=2$, such coverings are impossible. Our proof of \autoref{thm:LOPnotsRA} implies that the answer to this question is no for any $d \leq n/100$. 
In our argument, we show that weakenings of some axioms of $\LOP$ can be modified to a normalized form without increasing their degree by much. 
Proving that our pseudo-expectation function is non-negative on these normalized weakenings is equivalent to a lower bound in the above covering problem.

Our linear lower bound on $d$ is tight up to constant factors, as one can construct a covering with less than $d!$ collections for $d=n/2$.

\subsubsection*{Further Separations from Sherali-Adams Upper Bounds} 

Leveraging our $\Sigma_2$-pseudo-expectation, we show that if a $\TF\Sigma_2$ problem admits a Sherali-Adams upper bound then this problem is separated from the Linear Ordering Principle. Together with a simple Sherali-Adams upper bound for the $\Min$ we obtain \autoref{thm:LOPvsLN}. To prove this, we show that reductions can be split in two parts: first, a potentially hard to verify $\Sigma_2$-weakening step, followed by a \emph{counter-example reduction}, a type of reduction introduced in \cite{KolodziejczykT22,Thapen2024} that actually happens in $\TFNP^{dt}$. 
This reveals how the two types of black-box reductions that have been studied in the literature for total functions in the polynomial hierarchy relate.

\subsection{Related Works}
Ghentiyala et al.~\cite{Ghentiyala25} showed that $\LOP$ does not reduce to $\wRA$, even in the non-blackbox setting, assuming that $\P^{\pr{\MA}}\not\subset\AM\cap\coAM$. 
The only other known separation in $\TF\Sigma_2^{dt}$ was proven by Korten and Pitassi~\cite{KortenP24} who showed that the problem $\sRA \not \in \LOPclass^{dt}$. Their separation was obtained by proving a lower bound for depth-$3$ circuits via a switching lemma. This contrasts with our separation which uses proof complexity techniques. 

Proof complexity techniques have been the main method for obtaining separations between classes in $\TFNP$ in the black-box setting. This has led to a complete understanding of the relationships between the major classes \cite{BeameCEIP98, Morioka01, Buresh-OppenheimM04, GoosKRS19, GoosHJMPRT22, FlemingGPR24}. Proof complexity has been useful in $\TFNP$ because of the fact that membership in (sufficiently uniform) $\TFNP^{dt}$ subclasses is equivalent to efficient provability in some associated proof system \cite{BussFI23}. This has been used to show that, for example, the class $\PPADS^{dt}$ is equivalent to the unary Sherali-Adams proof system \cite{GoosHJMPRT22}. This connection was recently extended to subclasses of the total search problems in the polynomial hierarchy in \cite{FlemingIM25}, which is the basis for our work.

Recently, Thapen in \cite{Thapen2024} exhibited a new type of reductions, namely \emph{counter-example reductions}. They, along with the notion of \emph{herbrandization}, are used to find a $\TFNP^{dt}$ translation of what happens at the second level $\TF\Sigma_2^{dt}$ and they can also be used to define $\TFNP^{dt}$ subclasses from $\TF\Sigma_2^{dt}$ problems.
In \hyperref[sec:furtherSep]{Section~\ref*{sec:furtherSep}}, we add a bridge between this notion of counter-example reduction and the notion of $\Sigma_2$-weakening defined in \cite{FlemingIM25} and show that they are sufficient ingredients to capture reductions.

\section{Preliminaries on Black-Box $\TF\Sigma_2$}\label{sec:Prelim}
A \emph{query search problem} is a sequence of relations $R_n \subseteq \{0,1\}^n \times {\cal O}_n$, one for each $n \in \mathbb{N}$. It is \emph{total} if for every $x \in \{0,1\}^n$ there is an $o \in {\cal O}_n$ such that $(x,o) \in R_n$. We think of $x \in \{0,1\}^n$ as a bit string which can be accessed by querying individual bits, and we will measure the complexity of solving $R_n$ as the number of bits that must be queried. Hence, an efficient algorithm for $R_n$ will be one which finds a suitable $o$ while making at most $\polylog (n)$-many queries to the input. We will not charge the algorithm for other computational steps, and therefore an efficient algorithm corresponds to a shallow decision tree. Total query search problems which can be computed by decision trees of depth $\polylog(n)$ belong to the class $\FP^{dt}$, where $dt$ indicates that it is a black-box class. While search problems are formally defined as sequences $R=(R_n)$, we will often want to speak about individual elements in the sequence. For readability, we will abuse notation and refer to elements $R_n$ in the sequence as total search problems. In addition, we will often drop the subscript $n$, and rely on context to differentiate.

In this paper we will be considering total query search problems in the second level of the polynomial hierarchy $\TF\Sigma_2^{dt}$.

\begin{definition} \label{def:void} A total query search problem $R=(R_n)$, where $R_n \subseteq \{0,1\}^n \times {\cal O}_n$, belongs  to $\TF\Sigma_2^{dt}$ if for every $n$, $o \in {\cal O}_n$ and $x\in\{0,1\}^n$
\[ R_n(x,o) \iff \forall z \in \{0,1\}^{\ell}, V_{o,z}(x)=1,\]
where $\ell = \polylog(n)$ and for every $z \in \{0,1\}^{\ell}$, $V_{o,z}$ is a $\polylog(n)$-depth decision tree.

\end{definition}

We can compare the complexity of total search problems by taking reductions between them. The following defines \emph{decision tree reductions} (known as \emph{formulations}) between total query search problems, the query analogue of polynomial-time reductions.

\begin{definition}
\label{def:many_one_red}
    For total query search problems $R \subseteq \{0,1\}^n \times {\cal O}_{n}, S \subseteq \{0,1\}^m \times {\cal O}_{m}'$, there is an \emph{$S$-formulation of $R$} if, for every $i \in [m]$ and $o \in {\cal O'}_m$, there are functions $f_i :\{0,1\}^n \rightarrow \{0,1\}$ and $g_o : \{0,1\}^n \rightarrow {\cal O}_n$ such that 
    \[ S(f(x),o) \implies R(x,g_o(x)),\]
      where $f(x) = (f_1(x) \ldots f_m(x))$. The \emph{depth} of the formulation is 
      \[d~:=~\max\big (\{ \depth(f_i) : i \in [m]\} \cup \{\depth(g_o): o \in {\cal O}_m'\} \big),\]
      where $\depth( f )$ denotes the minimum depth of any decision tree which computes $f$. The \emph{size} of the formulation is $m$, the number of input bits to $S$. The \emph{complexity} of the formulation is $\log m+d$.
      The \emph{complexity of reducing $R$ to $S$} is the minimum complexity of all possible $S$-formulations of $R$.

      We extend this definition to sequences in the natural way. If $S=(S_m)$ is a sequence and $R_n$ is a single search problem, then the complexity of reducing $R_n$ to $S$ is the minimum over $m$ of the complexity of reducing $R_n$ to $S_m$. For two sequences of search problems $S=(S_m)$ and $R=(R_n)$, the complexity of reducing $R$ to $S$ is the complexity of reducing $R_n$ to $S$, as a function of $n$. A reduction from $R$ to $S$ is efficient if its complexity is $\polylog(n)$; we denote this by $R \leq_{dt} S$. 
\end{definition}

 Black-box $\TF\Sigma_2$ can be viewed as the study of the following family of total search problems. This will allow us to leverage a close connection between $\TF\Sigma_2^{dt}$ and proof complexity \cite{FlemingIM25} to prove our main result.

\begin{definition}
	Let $F = D_1 \wedge \ldots \wedge D_t$ be a formula in which each $D_i$ is a DNF. The \emph{false formula} search problem $\FF_{F} \subseteq \{0,1\}^n \times [t]$ is defined as 
\[ \FF_{F}(x,o) \iff D_o(x)=0.\]
\end{definition}
\begin{lemma}[\cite{FlemingIM25}]\label{lem:FFequiv}
	For any $R \in \TF\Sigma_2^{dt}$ there is an unsatisfiable formula $F_R = D_1 \wedge \ldots \wedge D_t$ where each $D_i$ is a DNF of $\polylog(n)$-width, such that $R =_{dt} \FF_{F_R}$. 
\end{lemma}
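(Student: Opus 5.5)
We build $F_R$ directly from the $\Sigma_2$ verifier of \autoref{def:void} and then give decision-tree reductions in both directions. Let $R=(R_n)$, $R_n\subseteq\{0,1\}^n\times{\cal O}_n$, with polylog-depth verifiers $V_{o,z}$, where $R_n(x,o)\iff\forall z\in\{0,1\}^\ell,\ V_{o,z}(x)=1$. For each $o$ and $z$, the function $1-V_{o,z}$ has a depth-$d$ decision tree, where $d=\polylog(n)$. Hence it can be written as a DNF $E_{o,z}$: the disjunction, over the accepting-for-rejection leaves, of the conjunctions of literals along each path. Each term has width at most $d$. For each $o\in{\cal O}_n$ we set
\[ D_o~:=~\bigvee_{z\in\{0,1\}^\ell} E_{o,z}, \]
which is a DNF of width at most $d=\polylog(n)$. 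By construction, $D_o(x)=0$ iff $V_{o,z}(x)=1$ for all $z$, i.e., iff $R_n(x,o)$. We take $F_R=\bigwedge_{o\in{\cal O}_n}D_o$. Totality of $R$ says that for every $x$ some $D_o(x)=0$, so $F_R$ is unsatisfiable, and $\FF_{F_R}(x,o)\iff R_n(x,o)$.

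The two problems are literally the same relation on the same input, so both reductions are trivial. For $R\le_{dt}\FF_{F_R}$ we take $f_i(x)=x_i$, which has depth $1$, and $g_o(x)=o$, which has depth $0$. For $\FF_{F_R}\le_{dt}R$ we use the identical maps. In both cases the size is $n$, so the complexity is $\log n+1$, and the reductions are efficient.

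The step needing care is the size bookkeeping. The DNF $D_o$ has up to $2^\ell\cdot 2^d=2^{\polylog(n)}$ terms, which is fine since only the width must be polylogarithmic. However, the number of clauses $t=|{\cal O}_n|$ must be compatible with the definition of $\FF$ as a problem whose instance size is measured in input bits. The inputs still have length $n$, so the formulation size is unaffected.

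If ${\cal O}_n$ is not already indexed by $[t]$, we fix any bijection ${\cal O}_n\to[t]$. The only remaining subtlety is if \autoref{def:void} allowed ${\cal O}_n$ to be superpolynomially large relative to the intended parameter. In that case we would encode outputs as strings of length $\polylog(n)$ and keep one DNF per string, assigning the constant-false DNF to invalid strings.

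A constant-false DNF would make the encoding output a solution of $\FF_{F_R}$ that has no preimage in $R$. To avoid this, we set the DNF of an invalid string to a trivially true DNF, e.g. $x_1\vee\neg x_1$, so such indices are never solutions. I expect this handling of degenerate solutions to be the only real obstacle; everything else is immediate from the definitions.
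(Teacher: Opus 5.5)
Your proof is correct and follows the route the paper indicates: $F_R=\bigwedge_{o}\bigvee_{z}\neg V_{o,z}$, with each $\neg V_{o,z}$ written as the DNF of its rejecting paths, is exactly the negation of the $\Sigma_3$ totality statement $\forall x\,\exists o\,\forall z\,V_{o,z}(x)$, and the identity input map with constant output maps gives the reductions in both directions. Your last two paragraphs on re-encoding outputs and padding with trivially true DNFs are unnecessary, because neither the definition of $\FF_F$ nor the complexity of a formulation depends on the number $t=|{\cal O}_n|$ of DNFs.
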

 This extends the well-known connection between $\TFNP^{dt}$ and the false clause search problem. 
The proof of this lemma proceeds by writing down the totality of $R$ as a $\Sigma_3$-formula and then taking its negation to obtain $F_R$.
Throughout this paper, we will abuse notation and use $R$ to refer to $F_R$, using context to differentiate. Hence, $\FF_R = \FF_{F_R}$.

\section{Proof Complexity and a Sufficient Condition for Separations}
To prove \autoref{thm:LOPnotsRA} we make use of a connection between $\TF\Sigma_2^{dt}$ and proof complexity. 
Fleming, Imrek, and Marciot \cite{FlemingIM25} showed that decision tree reductions between $\TF\Sigma_2^{dt}$ problems are equivalent to efficient proofs in certain proof systems equipped with the $\Sigma_2$-weakening rule from \hyperref[def:weakening]{Definition~\ref*{def:weakening}}. Using this connection they showed that proofs in the unary $\Sigma_2$-Sherali-Adams proof system are equivalent to reductions to $\sRA$, which we describe next.

For any boolean formula $F$, we will assume without loss of generality that all negations occur at the leaves and let $\vars^+(F)$ be the positive literals in $F$ and $\vars^-(F)$ be the negative literals. For any conjunct $t = \bigwedge_{x_i \in \vars^+(t)} x_i \wedge \bigwedge_{x_j \in \vars^-(t)}\neg x_j$, we associate the following polynomial $t(x)=\prod_{x_i \in \vars^+(t)} x_i \prod_{x_j \in \vars^-(t)}(1-x_j)$. 
We refer to the polynomials $t(x)$ also as conjuncts and also denote them by $t$.
We say that a \emph{conical junta} is a sum of conjuncts ${\cal J}:= \sum_i t_i$.

Let $D =\bigvee_{t \in D} t$ be any DNF. We can express $D$ as a polynomial
\begin{equation}\label{eq:poly}
 D(x) = \sum_{t \in D} t(x) -1.
\end{equation}
The degree of this polynomial is $\deg(D):= \max_{t \in D} \deg(t)$. 
We refer to $\deg(D)$ as the \emph{degree of the DNF} $D$.
 Observe that for any $x \in \{0,1\}^n$, the DNF  $D(x)$ is true iff $\sum_{t \in D}t(x) -1 \geq 0$. 
 We will abuse notation and denote by $D$ both the DNF and the associated polynomial, using context to differentiate.

Throughout we will work with \emph{multi-linear arithmetic}, associating $x_i^2=x_i$ for every variable $x$. This has the effect of restricting the underlying linear program to $\{0,1\}$-points. 
\begin{definition}
\phantomsection
\label{def:uSAproofsyst}
    Let $F = \{D_i\}_{i \in [m]}$ be an unsatisfiable collection of DNFs (a collection with no shared satisfying assignment).
     A $\Sigma_2$-\emph{unary-Sherali-Adams} (denoted $\SA$) proof $\Pi$ of $F$ is a $\Sigma_2$-weakening $F' = \{D_i'\}_{i \in [m']}$ of $F$ 
    together with a list of conical juntas ${\cal J}_i, {\cal J}$, such that
    \[ \sum_{i \in [m']} D_i' {\cal J}_i + {\cal J} = -1. \]
     The \emph{degree} $\deg(\Pi)$ is the maximum degree among $D_i, D_i' {\cal J}_i,$ and ${\cal J}$, and the \emph{size} $\size(\Pi)$ is the number of monomials 
     in $D_i,D_i' {\cal J}_i, {\cal J}$
     counted with multiplicity (i.e., each monomial is counted again each time it appears). 
    The \emph{complexity} of the proof is given by $\SA(\Pi)~:=~ \deg(\Pi)+\log \size(\Pi)$, and the complexity of proving $F$ is $\SA(F)~:=~\min_{\Pi} \SA(\Pi)$, where the minimum is taken over all $\SA$ proofs $\Pi$ of $F$.
\end{definition}
The only difference between the standard Sherali-Adams proof system and its unary variant is the measure of size; in particular, we cannot use large coefficients. When measuring degree, these systems are identical.

To prove \autoref{thm:LOPnotsRA} it suffices to show that there is no efficient $\Sigma_2$-$\SA$ proof of $\LOP$. 
\begin{theorem}[\cite{FlemingIM25}]\phantomsection\label{thm:wRACharact}
    For any $R \in \TF\Sigma_2^{dt}$ there exists a $\polylog(n)$-complexity $\Sigma_2$-$\SA$ proof of $R$ iff $R \in \sAvoid^{dt}$. 
\end{theorem}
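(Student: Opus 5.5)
The plan is to prove the two directions separately. In each direction the $\Sigma_2$-weakening step takes the place of the "hard to verify" part of a $\TF\Sigma_2$ reduction. The remaining part is then handled as in the $\TFNP^{dt}$ correspondence between unary Sherali-Adams and $\PPADS^{dt}$ \cite{GoosHJMPRT22}. By \autoref{lem:FFequiv} we may replace $R$ by $\FF_{F_R}$ throughout.

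\textbf{($\Leftarrow$): from a reduction to a proof.} First I would write $\sRA$ on $C:[2^n]\to[2^n+1]$ as the family of DNFs $D_y := \bigvee_{x} [C(x)=y]$, one for each $y\in[2^n+1]$. Each term $[C(x)=y]$ is a conjunct of width $O(n)$. For every $x$, the terms $[C(x)=y]$ over $y$ are the leaves of the full decision tree on the bits of $C(x)$, so $\sum_y [C(x)=y] = 1 - {\cal J}_x$, where ${\cal J}_x$ is the conical junta of the (at most one) invalid outputs. Hence
\[\sum_y D_y = \sum_x\big(1-{\cal J}_x\big) - (2^n+1) = -1 - \textstyle\sum_x {\cal J}_x,\]
which rearranges to $\sum_y D_y + \sum_x {\cal J}_x = -1$. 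This is a $\SA$ proof of degree $O(n)$ and size $2^{O(n)}$, hence of complexity $\mathrm{poly}(n)$ in the parameter of $\sRA$, and no weakening is needed.

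Next, given a depth-$d$ formulation $(f,g)$ of $R$ into $\sRA$, I would compose. For each output $o$ of $\sRA$ and each leaf $\ell$ of $g_o$ labelled $j$, correctness of the reduction says that $D^R_j$ implies the DNF $\neg\mathrm{path}_\ell \vee (D_o\circ f)$. Here $D_o\circ f$ is obtained by substituting the depth-$d$ trees $f_i$, so it is a DNF of degree $O(dn)$. These implications form a legal $\Sigma_2$-weakening of $F_R$. I would then substitute $f$ into the $\sRA$ refutation, expanding each variable via its decision-tree leaves so that conjuncts remain conjuncts. Finally, I would split each $D_o\circ f$ along the leaves of $g_o$ using $\sum_\ell \mathrm{path}_\ell=1$, since $D_o\circ f\cdot \mathrm{path}_\ell$ agrees with the weakened DNF times $\mathrm{path}_\ell$ in multilinear arithmetic. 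The degree and log-size both stay $\polylog(n)$.

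\textbf{($\Rightarrow$): from a proof to a reduction.} Let $\sum_{i} D_i'{\cal J}_i + {\cal J} = -1$ be a proof of complexity $\polylog(n)$, where each $D_i'$ is a weakening of some $D_{p(i)}$. Expanding $D_i' = \sum_{t\in D_i'} t - 1$ and evaluating at a fixed $x$ gives the counting identity
\[\#\{(i,t,m): t\in D_i',\ m\in{\cal J}_i,\ (tm)(x)=1\} + \#\{m\in{\cal J}: m(x)=1\} + 1 = \#\{(i,m): m\in{\cal J}_i,\ m(x)=1\}.\]
The right-hand side counts the "holes" and the left-hand side the "pigeons", with one extra pigeon. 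Following the $\PPADS$ argument of \cite{GoosHJMPRT22}, I would use the fact that the identity holds syntactically monomial-by-monomial. This lets me define a map $C$ from holes into pigeons-plus-one, where each value is computed by a $\polylog$-depth decision tree: query the variables of the relevant monomial and follow the cancellation pattern. A point $y$ outside the range of $C$ must then be a hole $(i,m)$ with $m(x)=1$ such that no term $t\in D_i'$ is true, i.e.\ $D_i'(x)=0$. Since $D_{p(i)}\Rightarrow D_i'$, we get $D_{p(i)}(x)=0$, so $g$ outputs $p(i)$; this second-level verification is exactly what $\sRA$ solutions are allowed to defer.

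\textbf{Main obstacle.} I expect the hardest step to be making $C$ a genuine injective, low-depth map whose domain and codomain sizes differ by exactly one, padded to the form $[2^N]\to[2^N+1]$. The cancellations in the multilinear identity must be matched locally, for every $x$ simultaneously, and multiplicities together with the leftover junta ${\cal J}$ must be absorbed. My first attempt would mirror the $\PPADS$ matching construction: fix a canonical pairing of cancelling monomial occurrences that depends only on a $\polylog$-depth query of $x$, send unmatched pigeons to dummy holes, and pad both sides to the required sizes.
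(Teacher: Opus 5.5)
The paper does not prove this statement; it is imported from \cite{FlemingIM25}. Your ($\Leftarrow$) direction is sound. It is essentially the paper's own composition argument in \autoref{prop:SA_proof_reduced_problem}, instantiated with the explicit refutation $\sum_y D_y+\sum_x{\cal J}_x=-1$ of \sRA, and you also track size.

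The gap is in ($\Rightarrow$). You never construct $C$, and the plan you sketch aims at the wrong object. Your phrase ``a map from holes into pigeons-plus-one'' is also reversed relative to your next sentence.

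\textbf{Why the sketched plan fails.} A map whose domain is the pigeons and whose codomain is the holes cannot be made local:
\begin{itemize}
\item At each $x$ the identity only says that active holes outnumber active pigeons by one. The correspondence witnessing this runs through chains of cancelling signed monomials (after expanding each factor $1-x_j$), and these chains have unbounded length. A direct pigeon-to-hole matching would require following them, which is not a polylog-depth computation.
\item The formal holes $(i,m)$ with $m(x)=0$ must also be hit, and their number is unrelated to the number of inactive pigeons.
\item ``Sending unmatched pigeons to dummy holes'' is unsound. Whether a fixed dummy is hit depends on $x$, and an unhit dummy is a \sRA solution with no falsified axiom behind it.
\item Injectivity is the wrong goal. \sRA charges nothing for collisions, and the proof must exploit exactly that. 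Insisting on injectivity would force you to decide which of the possibly many satisfied terms of $D_i'$ ``owns'' the hole $(i,m)$, i.e.\ to evaluate all of $D_i'$.
\end{itemize}

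\textbf{The missing idea.} Enlarge the ground set and use a one-step predecessor map instead of a matching.
\begin{itemize}
\item Take all signed occurrences $(c,S)$, $S\subseteq\vars^-(c)$, of monomials in the expansion of $1+\sum_{(i,t,m)}tm+\sum_{m\in{\cal J}}m-\sum_{(i,m)}m=0$.
\item Add one extra codomain point $\bot$.
\item Fix once and for all a bijection $\phi$ pairing each positive occurrence with a negative occurrence of the same monomial. This exists because every coefficient vanishes.
\item Call an occurrence active if its monomial equals $1$ at $x$.
\end{itemize}
Define $C$ as follows:
\begin{itemize}
\item an active negative occurrence maps to its $\phi$-partner;
\item an active positive occurrence of a conjunct false at $x$ maps to the occurrence obtained by toggling the least $j\in\vars^-(c)$ with $x_j=1$;
\item $((i,t,m),\emptyset)$ with $tm(x)=1$ maps to $((i,m),\emptyset)$, collisions allowed;
\item the extra constant $1$ maps to $\bot$;
\item everything else maps to itself.
\end{itemize}

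Each value costs $O(\deg)$ queries, and the ground set has roughly $2^{\deg}$ times the proof size many elements. The check that only the right points stay unhit goes as follows:
\begin{itemize}
\item every active positive occurrence is hit by its $\phi$-partner;
\item every active negative occurrence of a false conjunct is hit by its toggle partner;
\item inactive occurrences hit themselves, and $\bot$ is hit by the constant $1$.
\end{itemize}
So the only possibly unhit points are $((i,m),\emptyset)$ with $m(x)=1$. Such a point is unhit exactly when $D_i'(x)=0$. On these points $g$ can output, as a constant, the original axiom that $D_i'$ weakens.
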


To prove our lower bound for $\LOP$ we generalize the method of pseudo-expectations, which has been developed for Sherali-Adams (see \cite{FlemingKP19} for a survey), to $\Sigma_2$-Sherali-Adams degree lower bounds.

\begin{definition}\phantomsection\label{def:pseudoexp}
    For any collection of DNFs $F=\{D_i\}_{i \in [m]}$ a \emph{degree-$d$ pseudo-expectation} for $F$ is a linear function $\pe : \mathbb{R}[x] \rightarrow \mathbb{R}$ satisfying
    \begin{enumerate}
        \item $\pe[1] = 1$,
        \item $\pe [{\cal J}] \geq 0$ for every conical junta ${\cal J}$ of degree at most $d$,
        \item $\pe [D_i{\cal J}] \geq 0$ for every $D_i \in F$ and conical junta $J$ such that $\deg(D_i {\cal J}) \leq d$.
    \end{enumerate}
        A degree-$d$ \emph{$\Sigma_2$-pseudo-expectation} for $F$ is a family of degree-$d$ pseudo-expectations $\{\pe_G\}$, one for every $\Sigma_2$-weakening $G$ of $F$. 
\end{definition}

Note that a $\Sigma_2$-pseudo expectation for $F$ is equivalent to a standard pseudo-expectation for the collection of DNFs $F'$ which includes every $\Sigma_2$-weakening of the DNFs in $F$.

\begin{lemma}\phantomsection\label{lem:PESigma2}
    There is a degree-$d$ $\Sigma_2$-pseudo-expectation for $F$ iff there is no degree-$d$ $\Sigma_2$-$\SA$ proof of $F$. Moreover, if $F$ does not have a degree-$d$ $\Sigma_2$-$\SA$, then there is a degree-$d$ pseudo-expectation that works for all weakenings of $F$.
\end{lemma}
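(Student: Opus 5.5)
The plan is to reduce to the classical duality between Sherali-Adams refutations and pseudo-expectations, i.e.\ Farkas' lemma / LP duality on the finite-dimensional space of multilinear polynomials of degree at most $d$. Let $V_d$ be the space of multilinear polynomials of degree $\le d$ over the $n$ variables. Since we use multilinear arithmetic, every conjunct is an element of $V_d$ once its degree is bounded.

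First I handle a fixed collection of DNFs $G=\{D_i'\}$. Consider the convex cone
\[ K_G=\Big\{\textstyle\sum_i D_i'{\cal J}_i+{\cal J} \;:\; \deg(D_i'{\cal J}_i)\le d,\ \deg({\cal J})\le d\Big\}\subseteq V_d. \]
Here the ${\cal J}_i$ and ${\cal J}$ are conical juntas, i.e.\ nonnegative combinations of conjuncts. In degree measure, coefficients are irrelevant: a nonnegative rational combination can be scaled and unrolled into repeated conjuncts. So I treat $K_G$ as the cone generated by the finitely many products $D_i' t$ and $t$, with $t$ ranging over conjuncts of appropriate degree.

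A degree-$d$ proof using weakening $G$ exists iff $-1\in K_G$. Because $K_G$ is finitely generated, it is closed. Farkas' lemma then gives the dichotomy. Either $-1\in K_G$, or there is a linear functional $\pe$ on $V_d$ with $\pe[k]\ge 0$ for all $k\in K_G$ and $\pe[-1]<0$. In the second case, rescale so that $\pe[1]=1$ and extend $\pe$ linearly (say by $0$ on higher-degree monomials) to all of $\mathbb{R}[x]$. Conditions 1--3 of Definition~\ref{def:pseudoexp} are then exactly nonnegativity on the generators of $K_G$. Conversely, a pseudo-expectation applied to a purported identity $\sum D_i'{\cal J}_i+{\cal J}=-1$ gives $0\le -1$, a contradiction. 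So for each fixed $G$, a degree-$d$ proof through $G$ exists iff no pseudo-expectation for $G$ exists.

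Next I quantify over weakenings. There is no degree-$d$ $\Sigma_2$-$\SA$ proof iff for every $\Sigma_2$-weakening $G$ there is no proof through $G$. By the fixed-$G$ dichotomy, this holds iff every $G$ has a pseudo-expectation $\pe_G$, which is the first statement.

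For the ``moreover'' part I apply the same argument to the single collection $F'$ of all $\Sigma_2$-weakenings of all the $D_i$. The one subtlety, and the main point to check, is that $F'$ may be infinite as a set of DNFs. However, only DNFs whose products with some conjunct have degree at most $d$ ever appear, and these have degree $\le d$ as polynomials in $V_d$ (equation~\eqref{eq:poly}). Up to the polynomial they define, which is all that matters for the cone, there are only finitely many such DNFs, since a DNF has no repeated terms and there are finitely many conjuncts of degree $\le d$. So $K_{F'}$ is again finitely generated and closed.

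A degree-$d$ refutation of $F'$ uses only finitely many of its members, and their union is itself a $\Sigma_2$-weakening of $F$. Hence $-1\in K_{F'}$ iff some $\Sigma_2$-weakening admits a degree-$d$ proof. When no proof exists, Farkas' lemma yields a single $\pe$ nonnegative on $D\,{\cal J}$ for every weakening $D$. This $\pe$ serves simultaneously for all weakenings, matching the remark after Definition~\ref{def:pseudoexp}.
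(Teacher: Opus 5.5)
Your proposal is correct and follows essentially the same route as the paper. Like the paper, you apply the standard Sherali--Adams/pseudo-expectation duality to each $\Sigma_2$-weakening separately, and for the ``moreover'' part you apply it once to the union of all weakenings, which is itself a $\Sigma_2$-weakening. The difference is in detail rather than strategy: you prove the duality directly via Farkas' lemma on the finitely generated cone $K_G\subseteq V_d$ and you check that the union is harmless, whereas the paper simply cites the known duality and spells out only the easy direction. In your rational-to-integer scaling step, note that after multiplying through by $N$ you must add $(N-1)\cdot 1$ to ${\cal J}$ to get back to $-1$.
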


\begin{proof}
    The proof follows from the fact that for every unsatisfiable collection of DNFs (and indeed any collection of polynomials with no $\{0,1\}$-solutions), a degree-$d$ pseudo-expectation exists iff a Sherali-Adams proof does not (see e.g.~\cite{FlemingKP19}).
    Applying this to each weakening completes the proof. However, since we will use the forward direction of this theorem we include a proof.
    
    Let $\{\pe_G\}$ be a degree-$d$ $\Sigma_2$-pseudo-expectation and suppose for contradiction that there is also a degree-$d$ $\Sigma_2$-Sherali-Adams proof $(G, \Pi)$ of $F$, where $G= \{D_i'\}_{i \in [m']}$ is a $\Sigma_2$-weakening of $F$ and $\Pi$ is a list of conical juntas $\{{\cal J}_i\}, {\cal J}$ constituting a Sherali-Adams proof of $G$. Let $\pe_G$ be the pseudo-expectation which corresponds to $G$, then 
    \[ -1 = \pe_G[-1] = \pe \Big[\sum_{i \in [m']} D_i' {\cal J}_i + {\cal J} \Big] = \sum_{i \in [m']} \pe[D_i'{\cal J}_i] + \pe[{\cal J}] \geq 0.\] 

    Consider $\mathcal{G}=\bigcup_{G}G$, where the union is taken over every $\Sigma_2$-weakenings of $F$. Then $\pe_\mathcal{G}$ must be a $d$-pseudo-expectation for all weakenings of $F$, as they are all subsets of $\mathcal{G}$. 
\end{proof}

\section{Separating LOP from Strong Avoid} \label{sec:sepLOPfromSA}
In this section we prove the following.
\begin{theorem}
\phantomsection
\label{thm:PE}
   There exists $\Sigma_2$-pseudo-expectation for $\LOP$ of degree $(\frac{n}{400} - 1)$.
\end{theorem}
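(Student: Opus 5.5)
We construct a single degree-$d$ pseudo-expectation, with $d=\frac{n}{400}-1$, that works simultaneously for every $\Sigma_2$-weakening of $\LOP$. By \autoref{lem:PESigma2} this suffices.

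\textbf{The construction.} We follow Dantchev et al.~\cite{DantchevMR09}. Work with the encoding of $\LOP$ on $[n]$, with variables $x_{ij}$ meaning $i\prec j$. For a conjunct $t$, let $V(t)\subseteq[n]$ be the set of elements it mentions, so $|V(t)|\le 2\deg(t)$. We set
\[\pe[t]=\Pr_{\pi}[t(\pi)=1],\]
where $\pi$ is a uniformly random total order on $V(t)$ together with a few auxiliary elements. Equivalently, $\pi$ is a uniformly random total order on any set $U\supseteq V(t)$ of size $O(d)$. Restricting a uniform order on $U$ to a subset gives a uniform order on that subset, so the value does not depend on the choice of $U$. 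Hence $\pe$ extends to a well-defined linear map. It satisfies $\pe[1]=1$ and is nonnegative on conical juntas, since it is an honest expectation on small sets.

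\textbf{Reducing to the axioms.} Fix a weakening $D'$ of some axiom $D$ of $\LOP$, and a junta $\mathcal J$ with $\deg(D'\mathcal J)\le d$. By linearity it suffices to treat one conjunct $t$ of $\mathcal J$. Let $U=V(D')\cup V(t)$, so $|U|=O(d)$. We must show
\[\mathbb E_\pi\big[(\textstyle\sum_{s\in D'} s(\pi)-1)\,t(\pi)\big]\ge 0.\]
If $D$ is a linear-ordering axiom (irreflexivity, totality, transitivity), then $D$ holds on every total order. Since $D\Rightarrow D'$, the weakening $D'$ also holds on every total order, hence $\sum_{s\in D'}s(\pi)\ge1$ pointwise. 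The expectation is then nonnegative with no further work.

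\textbf{The minimality axioms: the hard step.} The difficulty is the axiom saying ``element $1$ is not minimal'', i.e.\ $\bigvee_{j} x_{j1}$. This axiom is wide, so to stay within degree $d$ the random order must live on $[n]$ and not only on $U$. Any weakening $D'$ is implied by it, so $D'$ holds on $\Ord^{*1}$. We first normalize $D'$. Since only orders induced on $U$ matter, we replace each conjunct $s$ by the family of complete orders $\sigma$ on sets $S$ of size about $d$ that refine $s$. This costs at most a constant factor in degree. After this step, $\sum_s s(\pi)$ counts how many pairs $(S_i,\sigma_i)$ are consistent with $\pi$. Conditioning on $t$ (itself a partial order constraint, which we absorb in the same way), nonnegativity becomes
\[\sum_i \Pr_\pi[\pi\in\mathcal C_{S_i,\sigma_i}]\ \ge\ 1,\]
which is exactly the statement that fewer than $d!$ pairs cannot cover $\Ord^{*1}$.

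To prove this covering lower bound for $d\le n/100$, I would use a double-counting or averaging argument. Assume $\bigcup_i\mathcal C_{S_i,\sigma_i}\supseteq\Ord^{*1}$ with fewer than $d!$ sets. Pick a random set $W$ of about $2d$ elements that avoids $1$ and the union of the few relevant $S_i$, and consider the orders that put some element of $W$ before $1$. Then use the symmetry of relabelings among elements outside $\bigcup S_i$ to show that the union loses measure at least $1/d!$ on the complement of the uncovered set. The total measure of the cover is at most the number of pairs times $1/d!$, which gives a contradiction. I expect most of the technical work to lie in making the normalization preserve degree and in this covering argument. The constant $400$ absorbs the losses from normalization ($|U|\le 4d$) and from the $n/100$ threshold.
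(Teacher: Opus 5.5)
\textbf{Gap: the covering lower bound is not proved.} Your construction is the paper's: the uniform distribution on total orders of $[n]$, of which your ``uniform order on $U$'' is just the marginal. Your handling of conditions 1--2, of the axioms other than $M_i$, and the reduction of the $M_1$ case to a covering statement via normalization also match the paper. (The claim $|U|=O(d)$ is false, since a weakening can have many terms and mention all of $[n]$. This is harmless only because $\pe$ is consistent.)

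The covering lower bound is the entire technical content of the theorem, and your sketch would not work.
\begin{itemize}
\item There may be up to $d!-1$ pairs $(S_i,\sigma_i)$, so $\bigcup_i S_i$ can be all of $[n]$ and no set $W$ avoiding it need exist. ``The few relevant $S_i$'' is never defined.
\item An arbitrary cover has no relabeling symmetry to exploit.
\item The target quantity is wrong. You must show that the over-count $\sum_i|\mathcal C_{S_i,\sigma_i}|-|\bigcup_i\mathcal C_{S_i,\sigma_i}|$ is at least the number $|R|$ of uncovered orders. All of these start with $1$, and $|R|$ can be as large as $(n-1)!$. An over-count of measure about $1/d!$ is useless. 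Also, the bound ``total measure $\le m/d!$'' gives no contradiction by itself, since $\Ord^{*1}$ has measure only $1-1/n<1-1/d!$ once $d!>n$.
\end{itemize}

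\textbf{The missing idea} is a local argument around each uncovered order $1z$.
\begin{itemize}
\item Every neighbour $i1(z\setm i)$ lies in $\Ord^{*1}$, so it is covered. Since $i$ is the only element preceding $1$ in it, the covering term must place $i$ first and $1$ second on its support. A term placing $1$ first would have to omit $i$, and would then also cover $1z$.
\item Each such term omits all but $k$ elements, where $k\le n/100$ is the support size. A counting argument then yields many triples $\{a,b,c\}$ such that, for each of the three elements, some such term omits the other two.
\item For such a triple, all six orders $\sigma(1)\sigma(2)\sigma(3)1(z\setm\{a,b,c\})$ are accepted by at least three terms.
\item Sum over all $z$ with $1z\in R$. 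Each such order arises from at most $(n-1)(n-2)(n-3)$ choices of $z$, so this gives at least $|R|$ extra occurrences. Using pairs instead of triples gives only $\tfrac45|R|$.
\end{itemize}

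\textbf{The junta is also handled too quickly.} Write $\mu$ for the uniform measure on $\Ord$. With a junta term $t$ present, you must show $\sum_i\mu(\mathcal C_{S_i,\sigma_i}\cap\{t=1\})\ge\mu(\{t=1\})$, not $\sum_i\mu(\mathcal C_{S_i,\sigma_i})\ge1$. First dispose of the case where $t$ does not place $1$ first on its support. There every order satisfying $t$ lies in $\Ord^{*1}$, and nonnegativity is immediate. In the remaining case, the counting may only move elements outside the support of $t$ in front of $1$.
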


From this, the separation $\LOP \not \in \sAvoid^{dt}$ follows immediately. Note that \hyperref[thm:PE]{Theorem~\ref*{thm:PE}} is much stronger than we need in order to rule out a reduction to $\sRA$, as works irrespective of the size of the $\SA$ proofs, or size of the reduction.

\begin{proof}[Proof of \autoref{thm:LOPnotsRA}]
    By \hyperref[thm:wRACharact]{Theorem~\ref*{thm:wRACharact}} $\LOP \in \sAvoid^{dt}$ iff there exists a $\polylog(n)$-complexity $\Sigma_2$-$\SA$ proof of the propositional encoding of $\LOP$ given by \autoref{lem:FFequiv}. In particular, such a proof must have $\polylog(n)$ degree. However, by \hyperref[lem:PESigma2]{Lemma~\ref*{lem:PESigma2}}, the existence of a degree $\Omega(n)$ $\Sigma_2$-pseudo-expectation precludes the existence of such a $\Sigma_2$-$\SA$ proof.
\end{proof}

In the remainder of this section we prove \hyperref[thm:PE]{Theorem~\ref*{thm:PE}}.
Applying \autoref{lem:FFequiv} to $\LOP$, we obtain an unsatisfiable formula which consists of the following constraints (axioms) over variables $x_{i,j}$,
where $x_{i,j}=1$ means "$i\prec j$":
\begin{itemize}
  \item $M_i$: $\bigvee_{j\in[n]\setminus i} x_{j,i}$, for $i\in n$, \hfill \emph{(Non-minimality)}
  \item $R_{i}$: $\neg x_{i,i}$, for $i\in [n]$, \hfill\emph{(Irreflexivity)}
  \item $A_{i,j}$: $\neg x_{i,j}\vee \neg x_{j,i}$, for distinct $i,j\in[n]$, \hfill\emph{(Asymmetry)}
  \item $T_{i,j,k}$: $x_{i,k}\vee \neg x_{i,j}\vee \neg x_{j,k}$, for distinct $i,j,k\in[n]$, \hfill\emph{(Transitivity)}
  \item $O_{i,j}$: $x_{i,j}\vee \neg x_{i,j}$, for distinct $i,j\in[n]$. \hfill\emph{(Totality)}
\end{itemize}
Note that even though each of these axioms could be expressed as a clause, we consider them as DNFs of degree 1.
We will use the same pseudo-expectation for each weakening of $\LOP$, which will be the standard choice \cite{DantchevMR09}: a uniform distribution over total orders. Let $\Ord=\{x\in [n]^n\mid \forall i\neq j, x_i\neq x_j\}$ be the set of total orders on $[n]$. We associate total orders on $[n]$ with both 
  strings in $[n]^n$
  (simply listing the elements in the given order) and bijective maps $[n]\rightarrow [n]$. For $z\in \Ord$, and $T\subseteq [n]$, we define $z\rest T$ as the permutation on $T$, $\pi:|T|\rightarrow T$, that orders those elements in the same order as $z$ does. We denote $z\rest([n]\setm T)$ by $z\setm T$ for convenience. For $x \in \Ord$ and a monomial $t$ we denote by $t(x)$ the monomial $t$ evaluated
  over the variables $x_{i,j}$,
where $x_{i,j}=1$ means "$i\prec j$" (as above) in the order $x$.

  \begin{definition}
  We define our pseudo-expectation $\pe$ on monomials $t$ 
  as $$\pe[t]=\frac{|\{x\in \Ord\mid t(x)=1\}|}{|\Ord|}.$$
The definition is extended to arbitrary polynomials by linearity. 
\end{definition}

By our encoding of DNFs as polynomials (\autoref{eq:poly}), for a DNF $D = \bigvee_i t_i$, we use the
notation $\pe[D]$ to mean $\pe[\sum_i t_i-1]$. Note that $\pe[t+ t']$ corresponds to the number of orders either $t$ or $t'$ accepts, plus the number of orders they both accept (all of this divided by $|\Ord|$). 
As an example, $$\pe[M_i]=\pe\Big[\sum_{j\in [n]\setminus i}x_{j,i}-1\Big]=\sum_{j\in [n]\setminus i}\pe[x_{j,i}]-\pe[1]=\sum_{j\in [n]\setminus i}\frac{1}{2}-1=\frac{n-1}{2}-1=\frac{n-3}{2},$$
as exactly half of the total orders are such that $j\prec i$. 

When taking the pseudo-expectation of a sum of terms (or when considering a DNF), we will refer to total orders accepted by more than one term of the sum (or the DNF) as being \emph{over-counted}. 
If a given order is accepted by $c > 1$ terms, we say that this order
contributes $c-1$ \emph{extra occurrences}.
In particular, in the above example, a total order which has $i$ as the 10th element of its order will be accepted by 9 terms, and is thus over-counted. Moreover, 8 of those 9 terms provide extra occurrences of the order.

We will show that our function $\pe$ gives a degree-$d$ $\Sigma_2$-pseudo-expectation for the set of axioms of LOP for any $d \leq \frac{n}{400} -1$.
We proceed by showing that it satisfies the conditions of \hyperref[def:pseudoexp]{Definition~\ref*{def:pseudoexp}}.\\
\mbox{}\\
\noindent \emph{Condition 1.} Note that the constant $1$ is treated as a monomial such that $1(x)=1$ for every $x$, and thus $\pe[1]=1$. \\
\mbox{}\\
\noindent \emph{Condition 2.} 
Notice that conjuncts containing $x_{i,i}$ 
are never satisfied by any total order. We can thus always delete those conjuncts without changing the pseudo-expectation. On the other hand, the negated variables $x_{i,i}$ are always true in any total order
by the axioms $R_i$, thus we can simply remove (or replace by constant 1) any 
negated $x_{i,i}$ without changing the pseudo-expectation.

Additionaly we can rewrite any conjunct $t$ over the LOP variables
$x_{i,j}$ for $i \neq j$ to an equivalent conjunct $t'$ without using negations, simply replacing
any negated variable $x_{i,j}$ by the variable $x_{j,i}$. 
To see this, notice that for any total order on LOP variables, if $i\neq j$, $x_{j,i}$ holds iff $x_{i,j}$ does not hold, and so the total orders which satisfy $t$ are exactly those that satisfy $t’$. Thus, $\pe[t]=\pe[t’]$. 
Observe that the polynomial corresponding to $t'$ is simply a monomial, and
$\pe[t] = \pe[t'] \geq 0$ by definition. Since any conical junta is a sum of conjuncts,
this implies that $\pe[{\cal J}] \geq 0$ for any conical junta ${\cal J}$.\\
\mbox{}\\
\noindent \emph{Condition 3.} It remains to show that condition 3 holds for our function $\pe$.
Our goal is thus to show that for all conical juntas ${\cal J}$, and any weakening $D'$ of any axiom of $\LOP$, $\pe[D'{\cal J}]\geq 0$, if $\deg(D'{\cal J})$ is small enough. 
First we observe that this easily follows for the axioms other than type $M_i$. We will handle axioms of type $M_i$ in \hyperref[sec:m1]{Section~\ref*{sec:m1}}.

 The axioms of LOP that are not of type $M_i$ must hold for every total order. 
 For any such axiom $D = \bigvee_{t \in D} t$ we have 
 $\cup_{t \in D} \{x \in \Ord | t(x)\} = \Ord$. Furthermore, this holds for any weakening $W$ of $D$, as $W$ must be satisfied by at least the same assignments as $D$. Hence, for any conical junta $\cal J$, we can deduce that 
$$\pe[W{\cal J}]=\left( \sum_{t\in W}\frac{|\{x\in\Ord\mid t(x){\cal J}(x)=1\}|}{|\Ord|}\right)
-\frac{|\{x\in\Ord\mid {\cal J}(x)=1\}|}{|\Ord|}\geq0,$$ 
as every order $x$ such that ${\cal J}(x)=1$ must satisfy at least one term $t$ in $W$. This proves that the pseudo-expectation is non-negative for any conical junta and any weakening of an axiom that is not of type $M_i$, regardless of the degree.

\subsection{The No-Minimal-Element Axioms}\label{sec:m1}
By symmetry, it suffices to consider weakenings of $M_1$,
weakenings of $M_i$ can be handled analogously.
It thus suffices, without loss of generality, to prove that for every low-degree conical junta ${\cal J}$, and every low-degree weakening $W$ of $M_1$, $\pe[W{\cal J}]\geq 0$. The main difficulty is to handle the diversity of the weakenings of $M_1$.

We begin by fixing some notation. It will be convenient to go back and forth between total orders and canonical terms representing them. For a set $S \subseteq [n]$ with $|S| \geq 2$, consider an ordering $\pi:[|S|] \rightarrow S$ of its elements. We will denote by $[\![\pi(1) \ldots \pi(|S|)]\!]$, or $[\![S]\!]_\pi$, the term $x_{\pi(1), \pi(2)} x_{\pi(2),\pi(3)}\cdots x_{\pi(|S|-1), \pi(|S|)}$.

We say that a term \emph{mentions $i$} if it contains one of the variables $x_{i,j}$ or $x_{j,i}$ for some $j\neq i$. For example, $x_{1,2}x_{2,3}$ mentions $\{1,2,3\}$. 
We refer to the set of elements that a term mentions as its \emph{support}
and the number of such elements as the \emph{support size} of the term.

To simplify our argument, we will first show that we can assume that each DNF 
has several convenient properties.
\begin{definition}
We say that a DNF $D$ is \emph{normalized} if 
\begin{enumerate}
\item Every term $t$ of $D$ is of the form $t=[\![S ]\!]_\pi$ for some set $S$ which includes $1$, and

\item All terms have the same support size, which we refer to as the \emph{support size of $D$}.
\end{enumerate}
\end{definition}

\begin{lemma}
\phantomsection\label{lem:normalization}

	Let $W$ be a DNF of degree $d$. 
    There exists a normalized DNF $N$ of support size $\leq 2d+1$ such that 
    $W$ accepts exactly the same total orders as $N$ does,
    and $\pe[W]=\pe[N]$. 
\end{lemma}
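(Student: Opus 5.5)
The plan is to normalize $W$ term by term. Each term $t$ of $W$ will be replaced by a family of canonical chain terms $[\![S]\!]_\sigma$ whose sets of accepted total orders are pairwise disjoint and together equal the set of orders accepted by $t$. Since $\pe[W]=\sum_{t\in W}\pe[t]-1$ and $\pe$ of a monomial is the fraction of orders it accepts, disjointness makes $\pe$ additive over the replacement. So $\pe[t]=\sum_\sigma \pe[[\![S]\!]_\sigma]$, and summing over $t$ gives $\pe[W]=\pe[N]$. The union condition gives that $W$ and $N$ accept the same total orders. Here $N$ is understood as a multiset of terms, so the same chain term arising from two different terms of $W$ is kept twice. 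This is consistent with the encoding (\autoref{eq:poly}), which sums over terms with multiplicity.

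\emph{Step 1 (cleaning).} We clean each term exactly as in the verification of Condition 2.
\begin{itemize}
\item Terms containing a positive $x_{i,i}$ accept no total order and have $\pe$ equal to $0$, so we delete them.
\item Negated literals $\neg x_{i,i}$ are true on every total order, so we drop them.
\item Each remaining negated literal $\neg x_{i,j}$ with $i\neq j$ is replaced by $x_{j,i}$.
\end{itemize}
None of this changes the set of accepted orders of any term, nor its $\pe$-value. After cleaning, each term is a monomial in at most $d$ variables $x_{i,j}$ with $i\neq j$, so its support has size at most $2d$.

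\emph{Step 2 (fixing a common support).} For each cleaned term $t$, let $S_t$ be its support together with the element $1$, so $|S_t|\le 2d+1$. Pad $S_t$ with arbitrary further elements of $[n]$ until $|S_t|=2d+1$; this assumes $n\ge 2d+1$, which holds in our regime $d\le n/400$. The key observation is that whether a total order $x$ satisfies $t$ depends only on the induced order $x\rest S_t$, because every variable of $t$ compares two elements of $S_t$. Let $\Sigma_t$ be the set of orders $\sigma$ on $S_t$ that satisfy $t$. Then
\[
\{x\in\Ord\mid t(x)=1\}=\bigsqcup_{\sigma\in\Sigma_t}\{x\in\Ord\mid x\rest S_t=\sigma\}.
\]
Moreover, $[\![S_t]\!]_\sigma$ is satisfied by $x$ exactly when $x\rest S_t=\sigma$, since the chain of consecutive comparisons determines the induced order by transitivity of $x$. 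We therefore replace $t$ by the terms $\{[\![S_t]\!]_\sigma\}_{\sigma\in\Sigma_t}$. If $\Sigma_t$ is empty, $t$ accepted nothing and simply disappears.

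\emph{Step 3 (checking).} Every resulting term is of the form $[\![S]\!]_\pi$ with $1\in S$, and all terms have support size exactly $2d+1$, so $N$ is normalized. The two equalities then follow from the disjoint decomposition in Step 2.

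The main point needing care is the exactness of the decomposition, in particular the claim that $[\![S]\!]_\sigma$ accepts precisely the orders extending $\sigma$. This is what makes the $\pe$ values add up with no over-counting introduced by the normalization itself. Multiplicities of terms in $N$ must be kept so that $\pe[N]$, computed via \autoref{eq:poly}, matches $\pe[W]$.
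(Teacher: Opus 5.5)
Your proposal is correct and follows the same route as the paper. Both rewrite each negated literal $\neg x_{i,j}$ as $x_{j,i}$, add $1$ to the term's support and pad it to size $2d+1$, and replace the term by the chain terms $[\![S_t]\!]_\sigma$ over the orders $\sigma$ on $S_t$ consistent with it, using that each accepted total order extends exactly one such $\sigma$. Your explicit treatment of the diagonal literals $x_{i,i}$, of an empty $\Sigma_t$, of keeping term multiplicities in $N$, and of the requirement $n\ge 2d+1$ fills in details the paper leaves implicit.
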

\begin{proof}
	Let $t \not \equiv 0$ be one of the terms of $W$. We first transform $t$ into a term $t’$ with no negated literals as follows: we replace each negated literal $\neg x_{i,j}$ in $t$ with $x_{j,i}$, noting that for any total order, $x_{j,i}$ holds iff $x_{i,j}$ does not, and so the total orders satisfying $t$ are exactly those satisfying $t’$. Hence, $\pe[t]=\pe[t’]$.

Next, we replace $t$ by a collection of terms of the form $[\![S]\!]_\pi$ for some $S \subseteq [n]$ containing $1$ (defined below), and a collection of orderings $\pi$ on $S$. 
Let $T$ be the support of $t$ and note that $|T| \leq 2d$ as each variable in $t$ can mention at most two elements. We modify $T$ to define $S$ as follows:
Include the element $1$  if it was not included yet in $T$, and 
pad $T$ with additional elements as necessary, so that the resulting set $S$ 
has size $|S|=2d +1$.
We use the resulting set $S$ as the support for the collection of terms replacing $t$.

Let $\Pi$ be the set of total orders on $S$ that are consistent with $t$. We claim that 
for every total order $z$, $t(z)\Leftrightarrow \bigvee_{\pi\in\Pi} \bra{S}[\pi](z)$, 
  and $\pe[t]=\pe[ \sum_{\pi\in\Pi} \bra{S}[\pi]]$. In other words, $\bigvee_{\pi\in\Pi} \bra{S}[\pi]$ accepts exactly the same total orders $t$ does, 
and each total order accepted by $t$ is consistent with exactly one $\pi \in \Pi$
(thus total orders accepted by $t$ are accepted by a unique term in $\bigvee_{\pi\in\Pi} \bra{S}[\pi]$). We set $N$ to be the resulting DNF $\bigvee_{t\in W}\bigvee_{\pi\in\Pi_t}\bra{S_t}[\pi]$.
\end{proof}

Note that the transformation in the proof of \hyperref[lem:normalization]{Lemma~\ref*{lem:normalization}} can highly increase the number of terms in $W$, but it does not change its pseudo-expectation,
and it at most doubles the degree: 
Notice that for a normalized DNF with support size $k$ its degree is exactly
$k-1$.
The support size of the resulting DNF $N$
is at most $2d +1$, thus the degree of $N$ is at most $2d$.

Suppose that $M_1$ has a weakening $W$ of degree $d$ such that $\pe[W]<0$. This lemma implies that there is also a normalized weakening $N$ of $M_1$ with $\pe[N]<0$ and of not much larger degree. 

\begin{lemma}\label{count}
For any normalized DNF $N$ of support size $k$, every term $t$ of $N$ has $\pe[t] = 1/k!$. 
\end{lemma}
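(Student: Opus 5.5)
By the definition of a normalized DNF, each term $t$ of $N$ has the form $t=\bra{S}[\pi]=x_{\pi(1),\pi(2)}x_{\pi(2),\pi(3)}\cdots x_{\pi(k-1),\pi(k)}$ for a set $S$ with $|S|=k$ and an ordering $\pi$ of $S$. Since $\pe[t]$ is the fraction of total orders $z\in\Ord$ with $t(z)=1$, the plan is to identify exactly which total orders satisfy $t$ and then count them by symmetry.

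First I would show that $t(z)=1$ iff $z\rest S=\pi$. If $z\rest S=\pi$, then $z$ places $\pi(i)$ before $\pi(i+1)$ for every $i$, so every variable in $t$ evaluates to $1$. Conversely, if every $x_{\pi(i),\pi(i+1)}$ holds in $z$, then transitivity of the total order $z$ gives $\pi(1)\prec_z\pi(2)\prec_z\cdots\prec_z\pi(k)$. Hence $z$ induces exactly the order $\pi$ on $S$. So $\{z\in\Ord\mid t(z)=1\}=\{z\in\Ord\mid z\rest S=\pi\}$.

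Next I would count this set. The map $z\mapsto z\rest S$ sends $\Ord$ onto the $k!$ orderings of $S$. All fibres have the same size: for any two orderings $\pi,\pi'$ of $S$, let $\rho$ be the bijection of $S$ with $\rho(\pi(i))=\pi'(i)$, extended by the identity outside $S$. Relabelling elements by $\rho$ is a bijection on $\Ord$ that carries the fibre over $\pi$ onto the fibre over $\pi'$. Each fibre therefore has size $n!/k!$ (explicitly, $\binom{n}{k}(n-k)!$, choosing the positions of $S$ and an order on the rest). This gives $\pe[t]=\frac{n!/k!}{n!}=1/k!$.

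There is no real obstacle here. The only point needing care is the converse direction of the first step: the chain of consecutive comparisons $x_{\pi(i),\pi(i+1)}$ must pin down the whole induced order, and this holds because $z$ is transitive. Note also that the condition $1\in S$ in the definition of normalized plays no role in this count.
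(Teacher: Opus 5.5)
Your proposal is correct and follows the paper's proof. You identify the orders accepted by $t=\bra{S}[\pi]$ with those satisfying $z\rest S=\pi$ and count them as $\binom{n}{k}(n-k)!=n!/k!$. Your transitivity argument for the converse direction and your relabelling argument for equal fibre sizes only make explicit what the paper takes for granted.
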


\begin{proof}
Let $t=\bra{T}[\pi]$ be one of the terms of $N$.
As a reminder, $\pe[t]=|\{x\in \Ord\mid t(x)=1\}|/|\Ord|$ where $\Ord$ is the set of total orders. Since $t=\bra{T}[\pi]$, $\pe[t]=|\{x\in \Ord \mid x\rest T=\pi\}|/|\Ord|$. To count the number of orders consistent with $t$, one can simply select the positions of the elements in $T$, set them in the right order, and then fill in the rest any possible way, thus,
    \begin{align*}
      |\{x\in \Ord \mid x\rest T=\pi\}|&={n\choose k}(n-k)!=\frac{n!}{k!}. \qedhere
    \end{align*}
\end{proof}

\begin{lemma}\label{prop:everythingForPE}
  Let $N$ be normalized DNF with support size $k$ that is a weakening of $M_1$,
and let $R$ be the set of total orders rejected by $N$. 
  Then the following holds: 
  \begin{enumerate}
    
    \item $\pe[N]\geq 0$ if and only if $N$ has at least $k!$ terms, 
    \item Let $S_i$ be the set of total orders accepted by exactly $i$ terms in $N$, then $\pe[N]\geq 0$ if and only if $|R|\leq \sum_{i=2}^{\infty}(i-1)|S_i|$.
  \end{enumerate}
\end{lemma}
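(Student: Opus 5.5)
Both parts follow from writing $\pe[N]$ explicitly with Lemma~\ref{count} and then counting orders.

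\emph{Part 1.} By the convention of \autoref{eq:poly}, if $N$ has $m$ terms then
\[ \pe[N]=\sum_{t\in N}\pe[t]-\pe[1]=\frac{m}{k!}-1, \]
since every term has $\pe[t]=1/k!$ by Lemma~\ref{count}. Hence $\pe[N]\ge 0$ if and only if $m\ge k!$. This part does not use that $N$ is a weakening of $M_1$.

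\emph{Part 2.} We count over $\Ord$ directly. For each order $x$, let $c(x)$ be the number of terms of $N$ accepting $x$. Then
\[ |\Ord|\,\pe[N]=\sum_{x\in\Ord}\bigl(c(x)-1\bigr). \]
We split $\Ord$ into three groups:
\begin{itemize}
  \item $R$, the rejected orders, where $c(x)=0$; each contributes $-1$.
  \item $S_1$, where $c(x)=1$; each contributes $0$.
  \item $S_i$ for $i\ge 2$; each contributes $i-1$.
\end{itemize}
This gives
\[ |\Ord|\,\pe[N]=\sum_{i\ge 2}(i-1)|S_i|-|R|. \]
The sum is finite, since $S_i$ is empty for $i$ larger than the number of terms. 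So $\pe[N]\ge 0$ if and only if $|R|\le \sum_{i\ge2}(i-1)|S_i|$.

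The only point requiring care is confirming that $\pe[t]$ for the monomial $t=\bra{T}[\pi]$ equals the fraction of orders accepted by the corresponding term. This holds because, under the encoding $x_{i,j}=1$ iff $i\prec j$, the monomial $t(x)$ evaluates to the indicator that $x$ satisfies the conjunct. Then the sum $\sum_t\pe[t]$ equals $\sum_x c(x)/|\Ord|$ by swapping the order of summation.

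The weakening hypothesis is not needed for either equivalence. It matters only for the interpretation: every order satisfying $M_1$, that is, every order in which $1$ is not the minimum, is accepted by $N$. Therefore $R$ is contained in the set of orders with $1$ first, which is what later connects this lemma to the covering problem. I do not expect a real obstacle here; the lemma is essentially bookkeeping.
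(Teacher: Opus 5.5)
Your proposal is correct and follows the paper's proof essentially line for line: Part~1 via Lemma~\ref{count} giving $\pe[N]=m/k!-1$, and Part~2 by exchanging the sum over terms with the sum over orders and splitting $\Ord$ into $R$, $S_1$, and $S_i$ for $i\ge 2$. Your remark that the weakening hypothesis is not used in either equivalence is also accurate; the paper's proof does not use it either.
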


We remark that the equivalence with the covering problem presented in the introduction comes from the first point in this lemma. Finding a covering of $\Ord^{*1}$ with less than $d!$ sets (that correspond here to normalized terms) is equivalent to finding a normalized weakening $N$ with support size $d$ and negative pseudo-expectation.

\begin{proof}
  For the first point, using \autoref{count}, $$\pe[N]=\pe \bigg[\sum_{\bra{T}[\pi]\in N}\bra{T}[\pi]-1 \bigg]=\sum_{\bra{T}[\pi]\in N}\pe[\bra{T}[\pi]]-1=\frac{\#\text{of terms in $N$}}{k!}-1.$$
    This is non-negative if and only if there are at least $k!$ terms in $N$. 
    For the second point, recall that for terms of the form
    $t=\bra{T}[\pi]$, we have $\pe[t]=|\{x\in \Ord \mid x\rest T=\pi\}|/|\Ord|$. Thus
    \begin{align*}
    \pe[N]&=\frac{1}{|\Ord|}\bigg(\sum_{\bra{T}[\pi]\in N}|\{x\in \Ord\mid x\rest T=\pi\}|\bigg) -1\\
    &=\frac{1}{|\Ord|} \bigg( \sum_{x\in \Ord} (|\{\bra{T}[\pi] \in N\mid x\rest T=\pi\}| \bigg) -\frac{|\Ord|}{|\Ord|}\\
    &=\frac{1}{|\Ord|} \bigg(\sum_{i=1}^\infty i |S_i|\bigg) - \frac{1}{|\Ord|}  \bigg( |R|+\sum_{i=1}^\infty |S_i| \bigg)\\
    &=\frac{1}{|\Ord|}\bigg( \sum_{i=2}^{\infty}(i-1)|S_i| - |R|\bigg). \qedhere
    \end{align*}
\end{proof}

The proof of \autoref{thm:LOPnotsRA} uses the second point of this lemma: we show that if the degree of a normalized weakening is low enough, then a large number of orders must be accepted by several terms of $W{\cal J}$.

\subsubsection{A Warmup}\label{sec:warmup}
We begin by considering the case when ${\cal J} =1$.
As a warmup, we show a simplified (but not sufficient) argument,
where we only focus on orders accepted by at least two terms.
Later we present the stronger argument considering orders accepted 
by at least 3 terms which is sufficient to prove our theorem. We do this for the general case of arbitrary conical juntas in \hyperref[sec:general]{Section~\ref*{sec:general}}.

We use the notation $1z$ to denote a total order on $[n]$ where the element $1$ is first and $z$ is a total order of the remaining $n-1$ elements.
The notation $i1z$ is similarly defined: it denotes orders where $i$ is first, $1$
is second, and $z$ is a total order of the remaining $n-2$ elements.

\begin{lemma}\label{lem:accept}
For a normalized DNF,
\begin{enumerate}
\item Total orders starting with 1 can only be accepted by terms $\bra{T}[\pi]$ with $\pi(1)=1$. 
\item Total orders with first element $i$ and second element 1 can only be accepted by terms $\bra{T}[\pi]$ with either $\pi(1)=1$ and $i\notin T$, or $\pi(1)=i$ and 
$\pi(2)=1$. 
\end{enumerate}    
\end{lemma}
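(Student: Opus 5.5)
The plan is to work directly from the form of a normalized term. By definition, every term of a normalized DNF is $\bra{T}[\pi] = x_{\pi(1),\pi(2)}x_{\pi(2),\pi(3)}\cdots x_{\pi(k-1),\pi(k)}$ with $1\in T$. Such a term accepts a total order $z$ exactly when $z\rest T=\pi$, i.e. when $z$ lists the elements of $T$ in the order $\pi(1),\ldots,\pi(k)$. So for a given order $z$, the only relevant datum is the position of $1$, and of a few neighbouring elements, inside $z\rest T$. Since $1\in T$ for every term, we always know where $1$ falls among the elements of $T$.

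For the first item, let $z=1w$. For any $T\ni 1$, the induced order $z\rest T$ starts with $1$, because $1$ precedes every other element in $z$. If $\bra{T}[\pi]$ accepts $z$, then $\pi=z\rest T$, so $\pi(1)=1$.

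For the second item, let $z=i1w$ and suppose $\bra{T}[\pi]$ accepts $z$, so that $\pi=z\rest T$. We split on whether $i\in T$.
\begin{itemize}
\item If $i\notin T$, then $1$ is the first element of $z$ lying in $T$, so $\pi(1)=1$.
\item If $i\in T$, then $i$ and $1$ are the first two elements of $z$, both lie in $T$, and every other element of $T$ comes later in $z$. Hence $\pi(1)=i$ and $\pi(2)=1$.
\end{itemize}
These are exactly the two stated alternatives.

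There is no real obstacle here. The only point needing care is that normalization guarantees $1\in T$ for every term, which rules out terms whose support avoids $1$; those could accept such orders with arbitrary $\pi$. One should also recall that the support size is at least $2$, so $\pi(2)$ is defined.
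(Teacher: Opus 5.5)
Your proof is correct and is essentially the paper's argument. The only difference is direction: the paper proceeds by contraposition, ruling out the other shapes of $\pi$ (namely $\pi(1)\neq 1$ for item 1, and $\pi(1)=1$ with $i\in T$, $\pi(2)=1$ with $\pi(1)\neq i$, or $\pi^{-1}(1)>2$ for item 2), whereas you compute $\pi=z\rest T$ directly by splitting on whether $i\in T$; both arguments rest only on $1\in T$ for every normalized term.
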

\begin{proof}
For the first point, since $1$ is in the support $T$ for each term in a normalized DNF, 
if $\pi(1)\neq 1$ then $\pi$ is not equal to $1z\rest T$ for any $1z\in \Ord$.
For the second, consider the following two cases:
If $\pi(1)=1$ and $i\in T$, then $i1z\rest T \neq \pi$ for any $i1z\in \Ord$. If $\pi(2)=1$ and $\pi(1)\neq i$, then $i1z\rest T\neq \pi$. Otherwise, if $\pi^{-1}(1)>2$, then $i1z\rest T\neq \pi$.
\end{proof}

The following definitions help to describe our combinatorial argument.
To simplify notation, we denote $z\setm \{i\}$ by $z\setm i$, and
$[n]\setm \{1\}$ by $[n]\setm 1$.

\begin{definition}
Let $z$ be an order on $[n]\setm 1$. We say that a term $\bra{T}[\pi]$ is
{\em $z$-good for $i$} if 
$i \in T$, $1 \in T$, $\pi(1)=i, \pi(2)=1$, and $i1(z\setm i)\rest T=\pi$. That is, the order $\pi$ on $T$ places $i$ first, $1$ second, and is 
consistent with the total order $i1(z\setm i)$.
\end{definition}

Notice that if a term $\bra{T}[\pi]$ is
$z$-good for $i$ then it accepts the order $i1(z\setm i)$.
Moreover, it also accepts the orders $ij1(z\setm \{i,j\})$ and $ji1(z\setm \{i,j\})$
for any $j$ that is not in its support.

\begin{definition} Let $z$ be an order on $[n]\setm 1$.
 We call an unordered pair of distinct elements $(i,j)\in ([n]\setm1)^2$ a \emph{hitting pair} for $z$ if the following conditions hold:
  \begin{itemize}
    \item $N$ has a term $\bra{T_1}[\pi_1]$ such that $\pi_1(1)=i, \pi_1(2)=1, j\notin T_1$ and $i1(z\setm i)\rest T_1=\pi_1$.
    \item $N$ has a term $\bra{T_2}[\pi_2]$ such that $\pi_2(1)=j, \pi_2(2)=1, i\notin T_2$ and $j1(z\setm j)\rest T_2=\pi_2$.
  \end{itemize}    
\end{definition}    
In other words, a pair
   $(i,j)$ is a hitting pair for $z$,
  if there is some term that is $z$-good for $i$ but its support does not contain $j$,
  and there is some term that is $z$-good for $j$ but its support does not contain $i$.

 Notice that if $(i,j)$ is a hitting pair for $z$, then \emph{both} $ij1(z\setm \{i,j\})$ and $ji1(z\setm \{i,j\})$ are accepted by \emph{both} $\bra{T_1}[\pi_1]$ and $\bra{T_2}[\pi_2]$. Therefore, we can get a lower bound on the 
  number of orders accepted by several terms of $N$ by getting a lower bound on the number of hitting pairs.

\begin{lemma}\label{lem:pairs}
    Let $N$ be a normalized DNF with support size $k$ that is a weakening of $M_1$. Let $R\subseteq \Ord$ be the set of orders rejected by $N$. If $k\leq n/10$, there are at least $4|R|/5$ total orders accepted by more than one term of $N$.
\end{lemma}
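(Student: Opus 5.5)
The plan is a double-counting argument. For each rejected order we produce many orders accepted by at least two terms, then bound how often the same doubly-accepted order is produced. First I would pin down $R$. Since $N$ is a weakening of $M_1$, every total order in which some element precedes $1$ satisfies $M_1$ and hence $N$. So $R$ consists only of orders of the form $1z$, with $z$ an order on $[n]\setm 1$.

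Fix $z$ with $1z\in R$. For each $i\in[n]\setm 1$ the order $i1(z\setm i)$ must be accepted. By \autoref{lem:accept} it is accepted either by a term $\bra{T}[\pi]$ with $\pi(1)=1$ and $i\notin T$, or by a term that is $z$-good for $i$. In the first case $i1(z\setm i)\rest T=1z\rest T$ because $i\notin T$, so that term would also accept $1z$, contradicting $1z\in R$. Hence for every $i$ we can choose a term $t_i=\bra{T_i}[\pi_i]$ that is $z$-good for $i$.

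Next I count hitting pairs for $z$. A pair $(i,j)$ is hitting as soon as $j\notin T_i$ and $i\notin T_j$, so any non-hitting pair has $j\in T_i$ or $i\in T_j$. Each $T_i$ contains $i$, $1$ and only $k-2$ further elements, so there are at most $(n-1)(k-2)$ non-hitting pairs. This leaves at least $\binom{n-1}{2}-(n-1)(k-2)$ hitting pairs. Each hitting pair yields two distinct orders, $ij1(z\setm\{i,j\})$ and $ji1(z\setm\{i,j\})$, each accepted by both $t_i$ and $t_j$. These two terms are distinct, since their first elements $i$ and $j$ differ. Distinct pairs give distinct orders, because the first two elements determine the pair. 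So $z$ produces at least
\[
(n-1)(n-2)-2(n-1)(k-2)
\]
orders that are accepted more than once.

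Finally I bound the overcounting. An order $ij1w$, with $w$ an order on the remaining $n-3$ elements, can be produced from $z$ only if $z\setm\{i,j\}=w$. The number of such $z$ is the number of ways to insert $i$ and $j$ into $w$, which is $(n-1)(n-2)$. Summing over all $z$ with $1z\in R$ and dividing by this multiplicity, the number of orders accepted by more than one term is at least
\[
|R|\left(1-\frac{2(k-2)}{n-2}\right)\geq |R|\left(1-\frac{2k}{n}\right)\geq \frac45|R|
\]
for $k\le n/10$.

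The step I expect to be most delicate is this last normalization. The middle inequality uses $(k-2)n\le k(n-2)$, which holds since $k\le n$. The step relies on counting the non-hitting pairs against $k-2$ rather than $k$, which is what keeps the constant below $1/5$ without any slack.
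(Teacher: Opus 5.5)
Your proposal is correct and follows the paper's proof essentially step for step. It uses the same reduction to $z$-good terms via \autoref{lem:accept}, the same doubly accepted orders $ij1(z\setm\{i,j\})$ and $ji1(z\setm\{i,j\})$ from each hitting pair, and the same $(n-1)(n-2)$ multiplicity bound. The only cosmetic difference is that you bound the non-hitting pairs directly by $(n-1)(k-2)$, whereas the paper counts at least $(n-1)(n-k)$ one-entries in $H_z$ and applies \autoref{hitting-count}. Both give exactly $\binom{n-1}{2}-(n-1)(k-2)$ hitting pairs, and hence the same factor $1-\frac{2(k-2)}{n-2}$.
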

\begin{proof}
    If $R=\emptyset$, we are done, so suppose otherwise.
Recall that we assume for normalized DNFs that each term contains the element $1$,
so we do not explicitly state this condition.

Given an order $z$ on $[n]\setm 1$, we define a matrix $H_z$ as follows.
  Let $H_z$ be an $(n-1)\times(n-1)$ matrix with rows and columns indexed by $i,j \in [n] \setminus 1$, respectively,  where $H_z(i,j)=1$ iff $i \neq j$, and
  $N$ has a term $\bra{T}[\pi]$ such that $\pi(1)=i$, $\pi(2)=1$, $j\notin T$, and $i1(z\setm i)\rest T=\pi$. That is, the $(i,j)$th entry is 1 iff there is a term
  that is $z$-good for $i$ but its support does not contain $j$.
  Thus, $(i,j)$ is a hitting pair iff both $H_z(i,j)=1$ and $H_z(j,i)=1$. 

Let $1z\in R$ be an order rejected by $N$. Note that since $N$ is a weakening of $M_1$, every order it rejects must start with 1. Moreover, for every $i\in[n]\setm 1$, $N$ must accept the total order $i1(z\setm i)$, as it is accepted by $M_1$. 

If $N$ had any terms $\bra{T}[\pi]$ with $\pi(1)=1$ and $1z\rest T=\pi$, $1z$ would not be rejected. Thus, by point 2 of 
    \autoref{lem:accept}, for each $i\in[n]\setm 1$, the order $i1(z\setm i)$ must be accepted by a term $\bra{T}[\pi]$ of $N$ with $\pi(1)=i, \pi(2)=1$.
    
That is, for each $i\in[n]\setm 1$, and any $z$ such that $1z$ is rejected by $N$,
there is a term of $N$ that is $z$-good for $i$.
Since $N$ is normalized, the support of each term has size $k$, and so row $i$ of $H_z$ must have at least $n-k$ entries with value $1$ (there are $n - k$ elements $j$ not included in the support of any particular term). The matrix thus contains at least $(n-1)(n-k)$ entries which are $1$.

Next, observe that the largest number of $1$-entries in a matrix without
a hitting pair is at most ${n-1 \choose 2}$: by definition, any such matrix has a $1$ in at most one of the entries $H_z(i,j)$ or $H_z(j,i)$, for each unordered pair $(i,j)$. Moreover, the following holds.
\begin{observation}\label{hitting-count}
Any $(n-1) \times (n-1)$ $\{0,1\}$-matrix with $q > {n-1 \choose 2}$ $1$-entries outside of the diagonal must
contain at least $q - {n-1 \choose 2}$ hitting pairs.
\end{observation}
\begin{proof}
Let $H$ be an arbitrary $(n-1) \times (n-1)$ $\{0,1\}$-matrix with $q > {n-1 \choose 2}$ $1$-entries, none of which are on the diagonal.
Since the number of $1$-entries is more than ${n-1 \choose 2}$, $H$ must contain
at least one hitting pair $(i,j)$. Replace one of the entries corresponding to
this pair (either the entry $H(i,j)$ or the entry $H(j,i)$) by $0$.
We can repeat this step at least $q - {n-1 \choose 2}$ times, accounting
for at least $q - {n-1 \choose 2}$ hitting pairs.
\end{proof}
Thus, in the matrix $H_z$ there are at least $(n-1)(n-k)-{n-1\choose 2}$ hitting pairs for $z$.
  This holds for each $z$ such that $1z\in R$. 

  Each of the hitting pairs in $H_z$ causes (at least) two orders to be accepted twice,
  in particular, as discussed above, if $(i,j)$ is a hitting pair for $z$, then \emph{both} $ij1(z\setm \{i,j\})$ and $ji1(z\setm \{i,j\})$ are accepted twice.
  Notice that the  total orders we count twice this way due to $(i,j)$ 
  are the same for $z$ and $z'$ 
  iff $z\setm\{i,j\}=z'\setm\{i,j\}$.
  For any given $z$, the number of $z'$ such that $z\setm\{i,j\}=z'\setm\{i,j\}$ is
  at most $(n-1)(n-2)$. 

  Thus, we can conclude
  that there must be at least $$\frac{2\left((n-1)(n-k)-{n-1\choose 2}\right)}{(n-1)(n-2)}|R|$$
  orders accepted by at least two terms of $N$. This value is always smaller than $|R|$. However, if $k\leq n/10$, it is at least $4|R|/5$.
\end{proof}

\subsubsection{Normalization with Respect to Juntas}

The above argument shows that focusing on hitting pairs is almost enough to conclude that $\pe[W]\geq 0$ for weakenings $W$, even if they have a fairly high degree. Our final
proof for the general case is similar to this argument, but focuses on hitting triples. Before stating and proving this result, we need to deal with conical juntas. 
We extend the normalization done in \hyperref[lem:normalization]{Lemma~\ref{lem:normalization}} to handle the additional inclusion of a conical junta. This is captured by the following lemmas.

\begin{lemma}\label{lem:JuntaNormalization1}
    Let the DNF $W$ be a weakening of $M_1$ and ${\cal J}$ be a conical junta. If $\pe[W{\cal J}]<0$ there exists $N$ and $t$ such that 
    \begin{enumerate}
        \item $N$ is the normalized form of $W$ given by \hyperref[lem:normalization]{Lemma~\ref*{lem:normalization}},
        \item $t=\bra{T}[\pi]$ such that  $1\in T$, $\pi(1)=1$, and the support size of $t$ is $\leq 2\deg({\cal J})+1$
        \item $\deg(Nt)\leq 2\deg(W) + 2\deg({\cal J})$, 
        \item $\pe[Nt]<0$. 
    \end{enumerate}
\end{lemma}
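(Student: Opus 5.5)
Since $\pe$ is linear and ${\cal J}=\sum_s s$ is a sum of conjuncts, we have $\pe[W{\cal J}]=\sum_{s\in{\cal J}}\pe[Ws]$, where each $s$ has degree at most $\deg({\cal J})$. So if $\pe[W{\cal J}]<0$, some single conjunct $s$ of ${\cal J}$ satisfies $\pe[Ws]<0$, and we may assume ${\cal J}=s$ is a single term. As in Condition 2, we rewrite $s$ without negations, replacing $\neg x_{i,j}$ by $x_{j,i}$ and discarding or simplifying the reflexive variables; this does not change which total orders satisfy $s$.

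Next we normalize $s$ in the same way $W$ is normalized in \autoref{lem:normalization}. Let $T_0$ be the support of $s$, so $|T_0|\le 2\deg({\cal J})$. Put $T=T_0\cup\{1\}$, padding if needed, so that $|T|\le 2\deg({\cal J})+1$. Then $s$ is equivalent on total orders to the disjoint sum $\sum_{\pi\in\Pi}\bra{T}[\pi]$, where $\Pi$ is the set of orders on $T$ consistent with $s$. Every total order satisfying $s$ is consistent with exactly one $\pi\in\Pi$.

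We then check that $\pe$ respects this replacement inside a product. Since $\pe$ only depends on the set of total orders satisfying each monomial, for any term $u$ we have $\pe[us]=\sum_\pi \pe[u\bra{T}[\pi]]$. The same holds with the $-1$ part of the DNF polynomial, so $\pe[Ws]=\sum_\pi\pe[W\bra{T}[\pi]]$. Replacing $W$ by its normalized form $N$ is also legitimate, because each term of $W$ is replaced by a disjoint family accepting exactly the same orders. Hence $\pe[W\bra{T}[\pi]]=\pe[N\bra{T}[\pi]]$, and some $\pi$ gives $\pe[Nt]<0$ with $t=\bra{T}[\pi]$. The degree bound follows directly: $\deg(N)\le 2\deg(W)$ and $\deg(t)=|T|-1\le 2\deg({\cal J})$.

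The remaining and main step is to show we may take $\pi(1)=1$. For this I would show that if $\pi(1)\neq 1$, that is, $1$ is not the first element of $T$, then $\pe[Nt]\ge 0$. Every total order satisfying $t$ has some element before $1$, so it satisfies $M_1$ and therefore the weakening $W$, and hence $N$. So every order counted in $\pe[t]$ is accepted by at least one term of $N$. This is exactly the computation used for the non-$M_i$ axioms: $\pe[Nt]=\frac{1}{|\Ord|}\bigl(\sum_{u\in N}|\{x: u(x)t(x)=1\}|-|\{x:t(x)=1\}|\bigr)\ge 0$. Therefore the negative $t$ must have $\pi(1)=1$, which completes the proof of all four items.
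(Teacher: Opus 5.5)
Your proposal is correct and follows essentially the paper's proof. You pick a single conjunct of ${\cal J}$ with $\pe[Ws]<0$, expand it into the disjoint family of chains $\bra{T}[\pi]$ on $T_0\cup\{1\}$, rule out $\pi(1)\neq 1$ because $M_1$ (hence $W$ and $N$) accepts every order with some element before $1$, and transfer from $W$ to $N$ using the disjoint groups of the normalization. The only differences are cosmetic: you expand over orders on $T_0\cup\{1\}$ uniformly, which also covers the case $1\in T_0$ that the paper passes over as ``without loss of generality'', and you run the $\pi(1)\neq 1$ argument on $N$ rather than on $W$.
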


\begin{proof}
Let ${\cal J}=\sum_{t\in {\cal J}} t $ and suppose that $0>\pe[W{\cal J}] = \sum_{t\in {\cal J}}\pe[Wt]$, where the equality follows by linearity. As the sum is negative, there must be at least one term $t^*$ such that  $\pe[Wt^*]<0$. 
    Let $T$ be the support of $t^*$. Note that without loss of generality we can assume that $1$ belongs to $T$: If $1\notin T$ then let $\Pi$ be the set of permutations on $T\cup\{1\}$ consistent with $t^*$. As shown in the proof of \hyperref[lem:normalization]{Lemma~\ref*{lem:normalization}}, $t^*$ is equivalent
    to $\bigvee_{\pi\in \Pi}\bra{T\cup \{1\}}[\pi]$, and $\pe[t^*]=\pe[\sum_{\pi\in \Pi}\bra{T\cup \{1\}}[\pi]]$. In particular, every order accepted by $t^*$ is accepted by $\bra{T\cup\{1\}}[\pi]$ for exactly one $\pi\in \Pi$. Thus, $\pe[Wt^*] = \sum_{\pi\in\Pi}\pe[W\bra{T\cup \{1\}}[\pi]]$, and if $\pe[Wt^*]<0$ then there must exist $\pi\in\Pi$ such that $\pe[W\bra{T\cup \{1\}}[\pi]]<0$. 
    
    Suppose that $\pi(1)\neq 1$, then $W \wedge \bra{T}[\pi]$ does not accept any order that starts with $1$, since $\bra{T}[\pi]$ does not. As $W$ is a weakening of $M_1$, it must accept every order that does not start with $1$. 
    Thus, $W \wedge \bra{T}[\pi]$ is equivalent to $\bra{T}[\pi]$
    and $\sum_{w\in W}\pe[w\bra{T}[\pi]]\geq  \pe[\bra{T}[\pi]]$, as every order accepted by $\bra{T}[\pi]$ must be accepted by
    at least one term $w$ of $W$.
    This implies that
    $$\pe\big[W\bra{T}[\pi]\big]=\pe \bigg[\Big(\sum_{w\in W}w -1\Big)\bra{T}[\pi]\bigg] = \sum_{w\in W}\pe[w\bra{T}[\pi]] - \pe[\bra{T}[\pi]] \geq \pe[\bra{T}[\pi]]-\pe[\bra{T}[\pi]]=0.$$
    Combining  this with the previous argument, we get that if there is some conical junta such that $\pe[W{\cal J}]<0$ then there is also a term $t=\bra{T}[\pi]$, with $1\in T$ and $\pi(1)=1$, such that $\pe[Wt]<0$. Note that the support size of $t=\bra{T}[\pi]$ is at most $2\deg({\cal J})+1$, and its degree 
    is at most $2\deg({\cal J})$.

    Next, we apply \hyperref[lem:normalization]{Lemma~\ref*{lem:normalization}} to the DNF $W$ to obtain a 
    normalized DNF $N$. Recall that the degree of $N$ will be at most $2\deg(W)$.
    Since both the degree of $N$ and $t$ might have at most doubled, and 
    $\deg(Nt)\leq\deg(N)+\deg(t)$, we get that 
    $\deg(Nt)\leq 2 \deg(W) + 2 \deg({\cal J})$, concluding (3).

We now argue that $\pe[Wt] < 0$ implies that   $\pe[Nt] < 0$. Expanding $Nt$, we have 
\[ \pe[Nt] = \pe \Big[ \Big( \sum_{v \in N} v-1 \Big) t \Big] = \sum_{v \in N}\pe[vt]-\pe[t].\]
 Since $N$ is the normalization of $W$, the terms of $N$ can be partitioned into groups, one for each term of $W$, such that the term of $W$ accepts a given order iff a unique term in its corresponding group accepts that order, and the term of $W$ rejects a given order iff all terms in the corresponding group reject it. Thus, $\sum_{v\in N}\pe[vt]=\sum_{w\in W}\pe[wt]$, and so $\pe[Nt] = \pe[Wt]<0$.
 \end{proof}

 Note that in the above lemma, both $N$ and $t$ are normalized, but the DNF 
 corresponding to $N \wedge t$ (obtained by taking the conjunction of each term of $N$ with $t$) is not. We address this issue next.
 
      \begin{lemma}\label{lem:JuntaNormalization2}
  Let $t=\bra{T}[\pi^*]$ be a term of support size $\ell$ such that
  $1\in T$ and $\pi^*(1)=1$ , and let $N$ be a normalized DNF of support size $h$ that is a weakening of $M_1$.       
        Then there is a DNF $N'$ such that 
    \begin{enumerate}
        \item For every total order $x$, $N'(x)\Leftrightarrow N(x)\wedge t(x)$. Hence, $N'$ is a weakening of $M_1\wedge t$,
        \item $N'$ is normalized with support size at most $\ell +h-1$,
        and degree at most $\ell + h -2 \leq \deg(N) + \deg(t)$,
        \item $\pe[Nt]=\pe[N'+1-t]$.
    \end{enumerate}
\end{lemma}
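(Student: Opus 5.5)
We will build $N'$ term by term, refining each product $vt$ with $v\in N$ into canonical terms on the merged support. Fix a term $v=\bra{S}[\sigma]$ of $N$; since $N$ is normalized, $1\in S$ and $|S|=h$. Let $U=S\cup T$, so $1\in U$ and $|U|\le \ell+h-1$. We also want every term of $N'$ to have the same support size. So, as in the proof of \hyperref[lem:normalization]{Lemma~\ref*{lem:normalization}}, we pad each such $U$ with extra elements until $|U|=\ell+h-1$. This needs $n\ge \ell+h-1$, which holds in the degree regime of interest.

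Let $\Pi_v$ be the set of total orders $\rho$ on $U$ with $\rho\rest S=\sigma$ and $\rho\rest T=\pi^*$. We will replace $v$ by the terms $\bra{U}[\rho]$ for $\rho\in\Pi_v$; if $\Pi_v$ is empty, $v$ contributes nothing. The basic observation is that a total order $x$ satisfies $v\wedge t$ iff $x\rest U\in\Pi_v$, and then exactly one $\rho\in\Pi_v$ has $\bra{U}[\rho](x)=1$, namely $\rho=x\rest U$. Hence for every $x$,
\[\textstyle v(x)t(x)=\sum_{\rho\in\Pi_v}\bra{U}[\rho](x).\]
Set $N'=\bigvee_{v\in N}\bigvee_{\rho\in\Pi_v}\bra{U_v}[\rho]$. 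Then $N'(x)\Leftrightarrow\bigvee_v v(x)t(x)\Leftrightarrow N(x)\wedge t(x)$, which gives item~1. Since $N$ is a weakening of $M_1$, it follows that $N'$ is implied by $M_1\wedge t$ in the sense used in the paper.

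Item~2 is then immediate. Every term has the form $\bra{U}[\rho]$ with $1\in U$ and common support size $\ell+h-1$, so its degree is $\ell+h-2$. Since $\deg(N)=h-1$ and $\deg(t)=\ell-1$, we get $\ell+h-2=\deg(N)+\deg(t)$.

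For item~3, we use the pointwise identity term by term and linearity of $\pe$ (which is an average over total orders):
\[\pe[Nt]=\sum_{v\in N}\pe[vt]-\pe[t]=\sum_{v}\sum_{\rho\in\Pi_v}\pe[\bra{U_v}[\rho]]-\pe[t]=\pe[N'+1]-\pe[t],\]
recalling that $N'$ denotes $\sum_{u\in N'}u-1$. This is exactly $\pe[N'+1-t]$.

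The main point requiring care is the multiplicity bookkeeping. Distinct terms of $N$ may yield identical refined terms, and these must be kept as separate copies in $N'$, counted with multiplicity just as $\pe$ on a DNF counts terms. With that convention, the over-counting structure of $N$ restricted to orders accepted by $t$ is preserved exactly. The unique-refinement property is what makes the pseudo-expectation identity exact rather than an inequality.

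Everything else is routine. The hypotheses $\pi^*(1)=1$ and that $N$ weakens $M_1$ are not needed for this construction; they are only used later.
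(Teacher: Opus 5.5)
Your proposal is correct and follows the paper's proof essentially verbatim. You refine each product $vt$ into the canonical terms on the (padded) merged support $S\cup T$ that are consistent with both $\sigma$ and $\pi^*$, note that every order accepted by $vt$ is accepted by exactly one refined term, and conclude item~3 by linearity of $\pe$. Your remarks on keeping duplicate refined terms as separate copies and on needing $n\ge \ell+h-1$ for the padding are details the paper leaves implicit, and you handle them correctly.
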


Coming back to our analogy with the covering problem presented in the introduction, proving that $\pe[N'+1-t]$ is non-negative for all such $N'$ of small degree (for a fixed $t$), is equivalent to saying that, even if we restrict the sets $\Ord$ and $\Ord^{*1}$ to their orders consistent with $t$, it is impossible to get a covering of $(\Ord\wedge t)^{*1}$ with fewer sets than the number needed to cover $\Ord\wedge t$, if the support size of the sets is small. 

\begin{proof}[Proof of \autoref{lem:JuntaNormalization2}]
  We construct $N'$ by ``merging" terms of $N$ and $t$. Formally, let $\bra{S}[\sigma]$ be a term of $N$, and let $\Pi_{S,\sigma}$ be the set of permutations on $S\cup T$ consistent with both $\sigma$ and $\pi^*$; note that this set might be empty. Define $N':=\bigvee_{\bra{S}[\sigma]\in N}\bigvee_{\pi\in \Pi_{S,\sigma}}\bra{S\cup T}_{\pi}$ and observe that  this DNF accepts the same total orders $Nt$ does. Indeed, an order is consistent with a term $\bra{S}[\sigma]$ of $N$ and $t$ iff there is an ordering in $\Pi_{S,\sigma}$ consistent with it. Every term in $N'$ is normalized, but since for different terms $\bra{S}[\sigma], \bra{S'}[\sigma]$ of $N$, the sizes of $S\cup T$ and $S'\cup T$ might be different, terms in $N'$ might be of different sizes. As in the previous arguments, one can simply add elements to the support of smaller terms to get them to a fixed size. The maximal support size of a term in $N'$ is the maximal size of $S\cup T$, where $S$ is the support of some term in $N$. Since $N$ and $t$ are both normalized, and since they both contain $1$, this value is at most $\ell + h-1$. This also implies that the degree $\deg(N') \leq \ell + h -2 \leq \deg(N) + \deg(t)$.
    
    It remains to prove that $\pe[Nt]=\pe[N'+1-t]$. For a single term $\bra{S}[\sigma]$ of $N$ it holds that $\pe[\bra{S}[\sigma]t]=\pe~[\sum_{\pi\in \Pi_{S,\sigma}}\bra{S\cup T}_{\pi}]$, as every total order accepted by $\bra{S}[\sigma]t$ is consistent with a unique $\pi\in \Pi_{S,\sigma}$. Thus, 
    \begin{align*}
        \pe[Nt]&=\pe \Big[\sum_{\bra{S}[\sigma]\in N}\bra{S}[\sigma]t- t \Big]=\sum_{\bra{S}[\sigma]\in N}\pe\big[\bra{S}[\sigma]t\big]- \pe[t]\\
        &=\sum_{\bra{S}[\sigma]\in N}\pe\bigg[\sum_{\pi\in \Pi_{S,\sigma}}\bra{S\cup T}_{\pi}\bigg]- \pe[t]=\pe\bigg[\sum_{\bra{S}[\sigma]\in N}\sum_{\pi\in \Pi_{S,\sigma}}\bra{S\cup T}_{\pi}\bigg]- \pe[t]\\
        &=\pe[N'+1]- \pe[t]=\pe[N'+1-t]. \qedhere
    \end{align*}
\end{proof}

\subsubsection{The General Case}\label{sec:general}
We are now ready to state and prove our main theorem about weakenings of $M_1$. 
\begin{theorem} \label{thm:PEwJunta}
    Let $t=\bra{T^*}[\pi^*]$ be a term of support size $\ell$ such that $\pi^*(1)=1$ , and let $N$ be a normalized DNF of support size $h$ that is a weakening of $M_1$. 
    If $\ell+h\leq n/100$, then $\pe[Nt]\geq 0$. 
\end{theorem}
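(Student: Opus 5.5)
By \autoref{lem:JuntaNormalization2}, I will replace $Nt$ by a normalized DNF $N'$ of support size $k\le \ell+h-1\le n/100$ with $\pe[Nt]=\pe[N'+1-t]$. The functional $\pe[N'+1-t]$ is the analogue of $\pe[N']$ in \autoref{prop:everythingForPE}, but computed inside the set $\Ord_t$ of total orders consistent with $t$. Every $x\in\Ord_t$ is accepted by $N'$ unless $x$ starts with $1$. So, writing $R_t$ for the orders of $\Ord_t$ starting with $1$ that $N'$ rejects and $S_i$ for the orders accepted by exactly $i$ terms of $N'$, the computation in the proof of \autoref{prop:everythingForPE} gives
\[
|\Ord|\,\pe[N'+1-t]=\sum_{i\ge 2}(i-1)|S_i|-|R_t| .
\]
The goal is therefore to show $\sum_{i\ge2}(i-1)|S_i|\ge |R_t|$: every rejected order must be paid for by an overcount.

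I will follow the hitting-pair argument of \autoref{lem:pairs}, upgraded to triples so that the constant exceeds $1$.

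1. Fix a rejected order $1z\in R_t$. Here $z$ is an order on $[n]\setm 1$ consistent with $t\setm 1$.
2. For each $i\notin T^*$, the order $i1(z\setm i)$ lies in $\Ord_t$, because $1$ is the minimum of $t$ and $i$ is not mentioned. This order is accepted, so by \autoref{lem:accept} there is a term of $N'$ that is $z$-good for $i$. Its support misses at least $n-k$ elements. There are at least $n-\ell$ such indices $i$.
3. For a triple $i,j,m\notin T^*$ (elements outside all relevant supports as needed), a $z$-good term for $i$ whose support misses $j,m$ accepts every order in which $i,j,m$ occupy the first three positions in any arrangement with $i$ before $1$ and $1$ following. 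More usefully, it accepts all orders $\{i,j,m\}1(z\setm\{i,j,m\})$ in which $i$ precedes the others appropriately.
4. Define a hitting triple: each of $i,j,m$ has a $z$-good term missing the other two. Such a triple makes each of the $6$ orders $\sigma 1 (z\setm\{i,j,m\})$ accepted at least twice, and some of them three times.
5. Use a counting lemma analogous to \autoref{hitting-count}, for the 3-uniform structure whose rows have at least $n-k$ ones. Each $i$'s term misses a $(1-k/n)$-fraction of elements, so a random triple is hitting with probability about $1-O(k/n)$. This yields at least $(1-O(k/n))\binom{n-\ell}{3}$ hitting triples for $z$.
6. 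Double count. A given overcounted order $\sigma1w$ arises from at most $(n-1)(n-2)(n-3)$ pairs $(z,\{i,j,m\})$, and each triple contributes at least $6$ units of extra occurrences. The total extra occurrences are then at least roughly $6\binom{n}{3}(1-O(k/n))/n^3\cdot |R_t|\approx (1-O(k/n))|R_t|$.

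Step 6 sits right at the threshold, so it must also exploit the orders accepted three times. That is the reason for using triples: it gives a constant strictly larger than $1$ before the $1-O(k/n)$ loss, and the loss is affordable when $k\le n/100$.

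The main obstacle is step 5 combined with the exact bookkeeping in step 6, that is, getting a per-$z$ count of extra occurrences with a leading constant strictly above $1$. My first plan is a direct probabilistic count instead of the greedy matrix argument. For a uniformly random pair $(i,j)$ of indices outside $T^*$, I will compute the expected multiplicity of acceptances of $ij1(z\setm\{i,j\})$ and $ji1(\cdots)$ contributed by the $z$-good terms for $i$ and for $j$. I will also add the terms good for a third element $m$ placed before $1$, showing the expected extra occurrences per rejected order is at least $1+c-O(k/n)$ for some $c>0$. Care is needed with the elements of $T^*$, which cannot be moved in front of $1$. This costs only an $O(\ell/n)$ factor, which is again absorbed by $\ell+h\le n/100$.
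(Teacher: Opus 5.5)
Your framework matches the paper's. You pass to $N'$ via \autoref{lem:JuntaNormalization2}, reduce the claim to $\sum_{i\ge 2}(i-1)|S_i|\ge |R_t|$, find a $z$-good term for every $i\notin T^*$, count hitting triples, and double count with the factor $(n-1)(n-2)(n-3)$. Your union-bound count in step 5 is valid: fix one $z$-good term per element, and at most about $k(n-\ell)^2$ triples fail. It is even a bit cleaner than the paper's greedy Observation~\ref{triple-count}.

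The gap is in steps 3--4. You say a hitting triple makes the six orders $\sigma 1(z\setm\{i,j,m\})$ accepted ``at least twice, and some of them three times.'' You also suggest the $z$-good term for $i$ accepts only arrangements in which $i$ ``precedes the others appropriately.'' Under that undercount each triple is worth only $6$ extra occurrences. Step 6 then gives about $(1-O(k/n))|R_t|$, which is strictly below what is needed. The ``probabilistic count of expected multiplicity'' you propose as a remedy is never specified, so as written the argument does not close.

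The missing observation is that the $z$-good term for $i$ whose support $T_1$ misses $j$ and $m$ accepts all six orders. Restricted to $T_1$, every order $\sigma 1(z\setm\{i,j,m\})$ reads $i,1,(z\setm i)\rest(T_1\setm\{i,1\})$, because $T_1$ cannot see where $j$ and $m$ sit. The same holds for the $z$-good terms for $j$ and for $m$. These three terms are distinct, since their first elements differ. So every one of the six orders is accepted at least three times, giving $2$ extra occurrences each and $12$ per hitting triple.

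With this correction your step 6 gives roughly $2(1-O(k/n))\frac{(n-\ell)(n-\ell-1)(n-\ell-2)}{(n-1)(n-2)(n-3)}|R_t|$, which exceeds $|R_t|$ when $k,\ell\le n/100$. This is how the paper concludes. It uses $12$ extra occurrences per triple and at least $\frac12\big(q-4\binom{n-\ell}{3}\big)$ hitting triples, and reaches a constant of about $6\cdot 0.99^3-4>1$.
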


In \hyperref[sec:warmup]{Section~\ref*{sec:warmup}} we presented a warmup argument for the case when the conical junta, ${\cal J}$, was assumed to be the trivial junta ${\cal J} =1$.
This simpler argument (considering only "pairs") 
is not sufficient to get a good bound, 
even for the case when ${\cal J} =1$,
but it illustrates our approach well. 
To get a good enough bound, we extend this argument to triples.
To prove \autoref{thm:PEwJunta},  we  allow ${\cal J}$ to be an arbitrary conical junta. This also includes the case when $\cal J$ could be the trivial junta ${\cal J}=1$, in which case we define its support to be  $T^*=\{1\}$.

\begin{proof}[Proof of \autoref{thm:PEwJunta}]
    Let $N'$ be the weakening of $M_1 \wedge t$ given by \autoref{lem:JuntaNormalization2}; that is $\pe[Nt]=\pe[N'+1-t]$. We will prove that $\pe[N'+1-t]\geq 0$. 
    Let $\ell+g$ be the support size of $N'$, 
    and recall that this is at most $\ell+h-1\leq n/100-1$. Let $R$ be the set of total orders that are consistent with $t$ and are rejected by $N'$, and hence also  by $N\wedge t$. Recall also that by \autoref{lem:JuntaNormalization2} $N'$ only 
    accepts orders consistent with $t$.
    We will prove that the over-counting in the pseudo-expectation that is due to orders that are accepted by several terms of $N'$ is sufficiently large compared to the number of orders rejected by $N'$ that are consistent with $t$. A sufficiently large "over-count" will be enough to show that $\pe[N'+1-t]\geq 0$,
    since  $\pe[N'+1-t]$ is exactly the sum of the number of orders that each term of $N'$ accepts minus the number of orders consistent with $t$, divided by $n!$. 
    $\pe[N'+1-t]$ is thus non-negative, if and only if 
    the number of extra occurrences of orders accepted by more than one term of $N'$ is at least $|R|$ (Recall that $|R|$ is the number of orders consistent with $t$ that are not accepted by any term of $N'$).

    Given an order $z$ on $[n]\setm 1$,
    we call an unordered triple of distinct elements $(i,j,k)\in([n]\setm T^*)^3$ a \emph{hitting triple for $z$} if the following hold: 
    \begin{enumerate}
        \item $N'$ has a term $\bra{T_1}[\pi_1]$ such that $\pi_1(1)=i, \pi_1(2)=1, j,k \notin T_1$ and $i1(z\setm i)\rest T_1=\pi_1$.
        \item $N'$ has a term $\bra{T_2}[\pi_2]$ such that $\pi_2(1)=j, \pi_2(2)=1, i,k \notin T_2$ and $j1(z\setm j)\rest T_2=\pi_2$.
        \item $N'$ has a term $\bra{T_3}[\pi_3]$ such that $\pi_3(1)=k, \pi_3(2)=1, i,j\notin T_3$ and $k1(z\setm k)\rest T_3=\pi_3$.
  \end{enumerate}
  In other words, 
      $(i,j,k)$ is a hitting triple for $z$,
  if there is some term that is $z$-good for $i$ 
  but $j$ and $k$ are missing from its support,
  there is some term that is $z$-good for $j$ 
  but $i$ and $k$ are missing from its support,
  and there is some term that is $z$-good for $k$ 
  but $i$ and $j$ are missing from its support.
    Analogous to hitting pairs, the presence of hitting triples implies that there are orders
    accepted by several terms.
    Indeed, if $(i,j,k)$ is a hitting triple for $z$, then for all six permutations $\sigma$ of $\{i,j,k\}$, the order $\sigma(1)\sigma(2)\sigma(3)1(z\setm\{i,j,k\})$ is accepted by at least three terms. 
    
    Given an order $z$ on $[n]\setm 1$, define an array $H_z:([n] \setminus T^*)^3 \rightarrow \{0,1\}$, such that $H_z(i,j,k)=1$ iff $i,j,k$ are distinct and not in $T^*$, 
    and $N'$ has a term that is $z$-good for $i$ 
  but $j$ and $k$ are missing from its support
(that is $N'$ has a term $\bra{S}[\pi]$ with $\pi(1)=i, \pi(2)=1$, $j,k\notin S$ and $i1(z\setm i)\rest S=\pi$). 
Note that by this definition,  $H_z(i,j,k)=H_z(i,k,j)$. Furthermore, notice that
$(i,j,k)$ is a hitting triple for $z$ iff for all six permutations $\pi$ of $\{i,j,k\}$, $H_z(\pi(1),\pi(2),\pi(3))=1$. 
In other words, 1-entries now come in pairs in the array, and one needs six 1-entries to get a hitting-triple. Thus, there can be at most $4{n-\ell \choose 3}$ 1-entries in an array $H_z$ without any hitting triples. 
    The $-\ell$ comes from the fact that we do not consider elements of $T^*$ in the triples.
    Similarly to \autoref{hitting-count}, we get the following.
    \begin{observation}\label{triple-count}
    Let $H$ be an $(n-\ell) \times (n-\ell) \times (n-\ell)$ \{0,1\}-array with $q > 4{n-\ell \choose 3}$ $1$-entries, where all $1$ entries are  in positions 
    with distinct coordinates $i,j,k$, such that $H(i,j,k) = H(i,k,j)$ for all 
    $i,j,k$. Then $H$
    must contain at least $\frac{1}{2}\left(q - 4{n-\ell \choose 3}\right)$ hitting triples. 
    \end{observation}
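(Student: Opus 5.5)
The plan is a direct counting argument that groups the $1$-entries of $H$ by the unordered triple of coordinates they occupy, rather than the iterative removal used for \autoref{hitting-count}. Write $m=n-\ell$ for the side length. Every $1$-entry $H(i,j,k)=1$ has three distinct coordinates, so it belongs to exactly one unordered triple $\{i,j,k\}$. Each unordered triple owns exactly $6$ positions, namely $(\sigma(1),\sigma(2),\sigma(3))$ for the six permutations $\sigma$ of $\{i,j,k\}$. Hence, if $c_{\{i,j,k\}}$ denotes the number of $1$-entries among these six positions, we get $q=\sum_{\{i,j,k\}} c_{\{i,j,k\}}$, where the sum ranges over the ${m\choose 3}$ unordered triples.

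Next I would use the symmetry $H(i,j,k)=H(i,k,j)$. It splits the six positions of a triple into three pairs, one for each choice of first coordinate, and within each pair the two entries are equal. Therefore $c_{\{i,j,k\}}\in\{0,2,4,6\}$. Moreover, $c_{\{i,j,k\}}=6$ holds exactly when all six permuted entries are $1$, which by the remark preceding the observation is exactly the condition that $(i,j,k)$ is a hitting triple. Consequently, every triple that is not a hitting triple has $c_{\{i,j,k\}}\le 4$. This parity step is the one real point of the argument: it is what upgrades the naive bound of $5$ per non-hitting triple to $4$.

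Letting $h$ be the number of hitting triples, we then obtain
\[ q=\sum_{\{i,j,k\}} c_{\{i,j,k\}} \le 6h+4\Big({m\choose 3}-h\Big)=4{m\choose 3}+2h, \]
and rearranging gives $h\ge \frac12\big(q-4{m\choose 3}\big)$, which is the claim. As a sanity check, the same computation with no hitting triples at all yields the bound of at most $4{n-\ell\choose 3}$ $1$-entries stated just before the observation.

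Alternatively, one could mirror the proof of \autoref{hitting-count}: while the count exceeds $4{m\choose 3}$, some triple must have $c=6$; zero out one symmetric pair in it, which lowers $q$ by $2$ and destroys one hitting triple. However, the direct sum above avoids any subtlety about whether such removals might interact across triples. There is no real obstacle here beyond correctly identifying that the symmetry forces even counts within each triple.
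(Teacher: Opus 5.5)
Your proof is correct and rests on the same key fact as the paper's: the symmetry $H(i,j,k)=H(i,k,j)$ makes the number of $1$-entries among each triple's six positions even, so a non-hitting triple carries at most four of them. The paper uses this iteratively (find a hitting triple, zero out one symmetric pair, repeat while more than $4{n-\ell \choose 3}$ ones remain); your single double count $q\le 4{n-\ell\choose 3}+2h$ reaches the same bound in one step. It also makes explicit what the paper leaves implicit, namely that successive removals destroy distinct hitting triples because positions are partitioned by triples.
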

    \begin{proof}
    
    Let $H$ be an  $(n-\ell) \times (n-\ell) \times (n-\ell)$ \{0,1\}-array with $q > 4{n-\ell \choose 3}$ $1$-entries, where all $1$-entries are in positions 
    with distinct coordinates $i,j,k$, such that $H(i,j,k) = H(i,k,j)$ for all 
    $i,j,k$. Since the $1$-entries come in pairs, we have $q \geq 4{n-\ell \choose 3} +2$ 
    entries that are $1$, which implies that for at least one triple with distinct $i,j,k$,
    all 6 corresponding entries must be $1$. Thus $H$ contains at least one hitting triple.
    Replace one pair of entries corresponding to this triple, say $H(i,j,k)$ and $H(i,k,j)$, by $0$.
    We can repeat this step at least $\frac{1}{2}\left(q - 4{n-\ell \choose 3}\right)$ times, accounting
    for at least $\frac{1}{2}\left(q - 4{n-\ell \choose 3}\right)$ hitting triples.
    \end{proof}
    
    Let $1z\in R$ be a rejected order that is consistent with $t$. 
    By a similar argument to the proof of \autoref{lem:pairs}, for every  $i\in[n]\setm T^*$, 
    there is a term of $N'$ that is $z$-good for $i$
    (and thus this term accepts the order $i1(z\setm i)$).
    This only holds for $i$ that are not in $T^*$ (the support of $t$), since $N'$ only accepts total orders that are accepted by $t$, that is only total orders that are consistent with $\pi^*$ on $T^*$, and moving any element in $T^*$ before $1$ would contradict $\pi^*$.
     
    Since for all $i\notin T^*$, the order $i1(z\setm i)$ is accepted by a term of support size $\ell+ g$, with $\pi(1)=i,\pi(2)=1$, there must be at least
    $(n-\ell-g)(n-\ell -g-1)$ 1-entries in $H_z$ at positions starting with $i$. Hence, there are at least $\frac{1}{2}\left((n-\ell )(n-\ell-g)(n-\ell -g-1)-4{n-\ell\choose 3}\right)$ hitting triples for $z$. Each hitting triple gives (at least) six orders accepted by (at least) three terms: specifically, as noted earlier,
    for all six permutations $\sigma$ of $\{i,j,k\}$, the order 
    $\sigma(1)\sigma(2)\sigma(3)1(z\setm\{i,j,k\})$ is accepted by at least three terms. 
    Thus each hitting triple gives (at least) 12 total extra occurences of 
    over-counted orders.
    By a similar reasoning to the proof of \autoref{lem:pairs}, each of these 
    specific orders could come from hitting triples of at most $(n-1)(n-2)(n-3)$ different $z$-s, since for any given  order $z$ on $[n]\setm 1$ consistent with $t$, the number of orders $z'$ on $[n]\setm 1$ consistent with $t$ such that $z\setm\{i,j,k\}=z'\setm\{i,j,k\}$ is
  at most $(n - 1)(n-2)(n-3)$.

    Thus, we get that there are at least 
    $$\frac{6\left((n-\ell)(n-\ell-g)(n-\ell-g-1)-4{n-\ell\choose 3}\right)}{(n- 1)(n- 2)(n- 3)}|R|$$
    total extra occurrences of over-counted orders. 
    
    Plugging in $\ell +g+1\leq n/100$, we get 
    \begin{align*}
        &\frac{6\left((n-\ell)(n- \ell-g)(n- \ell-g-1)-4{n- \ell\choose 3}\right)}{(n- 1)(n- 2)(n- 3)}|R|\\
        &=\left(\frac{6(n-\ell)(n- \ell-g)(n- \ell-g-1)}{(n- 1)(n- 2)(n- 3)}-\frac{24{n- \ell\choose 3}}{(n- 1)(n- 2)(n- 3)}\right)|R|\\
        &\geq \left(\frac{6\left((99n/100)^3\right)}{n^3}-\frac{4 (n-\ell)(n-\ell-1)(n-\ell-2)}{(n- 1)(n- 2)(n- 3)}\right)|R|\\
        &\geq \left(\frac{6\left((99n/100)^3\right)}{n^3}-4\right)|R|=\left(6\frac{99^3}{100^3}-4\right)|R|\geq |R| \qedhere
    \end{align*}

\end{proof}

We can now conclude the proof of the main theorem of this section.

\begin{proof}[Proof of {\hyperref[thm:PE]{Theorem~\ref*{thm:PE}}}] 
As we argued above, it suffices to consider weakenings of $M_1$.
    By \autoref{lem:JuntaNormalization1}, 
    if  $\pe[W{\cal J}]<0$ for a weakening $W$ of $M_1$ and conical junta ${\cal J}$, both of degree at most $d$,
    then there is a normalized weakening $N$ of support size $h \leq 2d +1$ and a normalized term $t$ of support size
    $\ell \leq 2d +1$ such that $\pe[Nt] < 0$.
    Notice that if $d \leq (\frac{n}{400} -1)$, then $\ell + h \leq 4d + 2 \leq n/100$.
    By \autoref{thm:PEwJunta}, this cannot happen. 
    Thus, there is a $\Sigma_2$-pseudo-expectation for $\LOP$ of degree 
    $(\frac{n}{400} -1)$.
\end{proof}

Note that the $n/100$ bound was picked for convenience, one could get a better bound by counting more than the hitting triples, and by being more careful in the computation. 

\section{A Criterion for Non-Reducibility of $\LOP$}
\label{sec:furtherSep}

In this section, we show that our pseudo-expectation for $\LOP$ (\hyperref[lem:PESigma2]{Lemma~\ref*{lem:PESigma2}}) implies a general criterion for the non-reducibility of $\LOP$ to other problems in $\TF\Sigma_2^{dt}$.

\begin{corollary}\label{lem:LOP_non_reduc}
  Suppose that $R\in\TF\Sigma_2^{dt}$ and $R$ has a $\polylog(n)$-degree $\SA$-refutation. Then $\LOP\nleq_{dt} R$.
\end{corollary}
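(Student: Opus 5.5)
The plan is to argue by contradiction. Suppose $\LOP\leq_{dt} R$ via a formulation of complexity $\polylog(n)$. From the reduction and the low-degree $\SA$-refutation of $R$, we build a $\polylog(n)$-degree $\Sigma_2$-$\SA$ proof of $\LOP$. Since we only care about degree, size plays no role. Such a proof contradicts \hyperref[thm:PE]{Theorem~\ref*{thm:PE}} together with \hyperref[lem:PESigma2]{Lemma~\ref*{lem:PESigma2}}: they give a $\Sigma_2$-pseudo-expectation for $\LOP$ of degree $n/400-1\gg\polylog(n)$, and this pseudo-expectation in fact works simultaneously for all weakenings.

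The construction follows the two-part decomposition announced in the introduction.

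First, write $R$ as $\FF_{F_R}$ with $F_R=\{E_o\}_o$ a collection of $\polylog$-width DNFs, using \autoref{lem:FFequiv}. Let $f$ be the decision-tree maps of the formulation, each of depth $q=\polylog(n)$. For each $o$, the reduction guarantees $E_o(f(x))=0\Rightarrow \LOP(x,g_o(x))$.

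Second, substitute $f$ into each $E_o$. Each literal of $E_o$ is replaced by the DNF of accepting paths of the corresponding decision tree, and the result is expanded. This gives a DNF $E_o\circ f$ over the $\LOP$ variables of degree at most $\polylog\cdot q$.

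Third, show that each $E_o\circ f$ is implied by some $\LOP$ axiom, or is a $\Sigma_2$-weakening of a combination of them. The intended reason: on any $x$ falsifying $E_o\circ f$, the output $g_o(x)$ names a violated $\LOP$ constraint. So along each path of the depth-$q$ tree $g_o$, $E_o\circ f$ is implied by the axiom $D$ labelling that leaf. To make this a genuine weakening, I would instead weaken each axiom $D$ of $\LOP$ to the family of DNFs $\{D\vee \neg p\}$, ranging over all paths $p$ of the trees $g_o$ whose leaf is $D$. The key identity to check is that, for each $o$, $E_o\circ f$ is implied by the conjunction over leaves. This part is the \emph{counter-example} style reduction.

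Fourth, transport the $\SA$-refutation of $F_R$. We have $\sum_o E_o\mathcal{J}_o+\mathcal{J}=-1$ of degree $\polylog$. Composing the juntas with $f$, replacing each variable by its path-sum, keeps them conical juntas of degree $\polylog$, and the multilinear identity survives substitution on $\{0,1\}$-points. What remains is to express each $(E_o\circ f)\mathcal{J}'_o$ as a nonnegative combination of weakened $\LOP$ axioms times juntas, plus a junta.

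The main obstacle is that $E_o\circ f$ is not itself a single weakening of one axiom; it is implied only by different axioms on different branches of $g_o$. I would handle this by partitioning unity along the tree, $1=\sum_{p} p$ over the paths of $g_o$. Then
\[ E_o\circ f=\sum_p p\,(E_o\circ f), \]
and for each path $p$ the DNF $E_o\circ f$ restricted by $p$ should be implied by the leaf axiom $D_p$. For each $p$, I would apply $\Sigma_2$-weakening to $D_p$ to obtain the DNF $(E_o\circ f)\vee\neg p$, and use that
\[ p\cdot\big((E_o\circ f)\vee\neg p\big)\quad\text{and}\quad p\cdot(E_o\circ f) \]
differ by a conical junta. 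I expect the care to lie in controlling the overcounting in the DNF-polynomial encoding under this multiplication. Even if that step fails, a fallback is to apply the pseudo-expectation directly: $\pe$ is nonnegative on each $p\,(E_o\circ f)\,\mathcal{J}$ because every order satisfying $p$ satisfies $D_p$, and hence, by the branch argument, satisfies $E_o\circ f$. Evaluating $\pe$ on the transported refutation then gives $-1\geq 0$ directly, using the degree-$n/400$ bound of \autoref{thm:PEwJunta} for weakenings of $M_i$.
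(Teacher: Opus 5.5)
Your main route is correct and is essentially the paper's proof. Your per-path weakenings $(E_o\circ f)\vee\neg p$ of the leaf axiom $D_p$, multiplied by $p\,(\mathcal{J}_o\circ f)$ and summed via $\sum_p p=1$, are the paper's reduced-problem axioms $\overline{G}_{b,y}\vee\bigvee_c\overline{R}_{s,b,c}\circ f$ multiplied by $G_{b,y}\,(J_b\circ f)$ (Lemma~\ref{lem:factorization} and Proposition~\ref{prop:SA_proof_reduced_problem}), only grouped by path rather than by leaf label.

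The step you worried about is an exact identity, since every term of $\neg p$ vanishes when multiplied by $p$ in multilinear arithmetic, so no overcounting arises. The fallback is therefore unnecessary, and its claim that every order satisfying $p$ satisfies $D_p$ is false when $D_p=M_i$. Also note that the degree of $R$'s refutation is $\polylog(s(n))$; this is $\polylog(n)$ only because $\log s(n)$ is bounded by the complexity of the reduction.
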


Using this, we show that $\LOP \not \in \MINclass^{dt}$.

\begin{proof}[Proof of \autoref{thm:LOPvsLN}]
    The axioms of the formula of $\MIN_n$ are:
    \begin{itemize}
        \item $\bigvee_{i\in[n]}x_i$, giving the inequality $\sum_{i\in[n]}x_i-1\geq 0$; \hfill \emph{(Not all $0$)}
        \item $\overline{x_i}\vee\bigvee_{j<i}x_j$, giving the inequality $(1-x_i)+\sum_{j<i}x_j -1= -x_i+\sum_{j<i}x_j \geq 0$ for each $i\in[n]$. \\ \phantom{a}\hfill \emph{($i$ is the least index taking value $1$)}
    \end{itemize}
    This admits the following simple $\SA$-refutation,
    $$\sum_{i=1}^nx_i-1+\sum_{i=0}^{n-1}2^i\Big(-x_{n-i}+\sum_{j<n-i}x_j\Big)=-1+\sum_{i=1}^n\Big(1+\sum_{j=0}^{n-i-1}2^j-2^{n-i}\Big)x_i=-1.$$
    Although, the refutation is of exponential size, it has constant degree.
    By \autoref{lem:LOP_non_reduc}, $\LOP\notin\MINclass^{dt}$.
\end{proof} 

We now prove \autoref{lem:LOP_non_reduc}. To do so, in \hyperref[lem:factorization]{Lemma~\ref*{lem:factorization}} we show that we can factorize any reduction between $\TF\Sigma_2^{dt}$ problems into two parts (i) a weakening step, akin to $\Sigma_2$-weakening, and (ii) a \emph{counter-example reduction} \cite{KolodziejczykT22,Thapen2024} which takes place in $\TFNP^{dt}$. This elucidates how the two types of reductions between $\TF\Sigma_2^{dt}$ problems that have been considered (formulations and counter-example reductions) relate. Using this, our proof proceeds as follows: 
\begin{enumerate}
    \item Consider an $R$-formulation of $\LOP$ where $R$ admits a low-degree $\SA$-refutation.
    \item Factorize the $R$-formulation into two parts using \hyperref[lem:factorization]{Lemma~\ref*{lem:factorization}}: a weakening from $\LOP$ to a problem $P$, and a counter-example reduction from $P$ to $R$.
    \item Argue that we can derive an $\SA$-refutation of $P$ from an $\SA$-refutation of $R$ while roughly maintaining the degree (\autoref{prop:SA_proof_reduced_problem}).
\end{enumerate}

\subsection{Factoring Reductions}

To improve readability, we will use a slightly different notation for $\TF\Sigma_2^{dt}$ problems in this section.
\begin{notation}
    A $\TF\Sigma_2^{dt}$ search problem $R$ is a sequence of search problems $R_m\subseteq\{0,1\}^m\times \mathcal{O}^R_m\times \mathcal{W}^R_{m}$ for each $m\in\mathbb{N}$. $\mathcal{O}^R_m$ is the set of possible $R_m$-outputs and $\mathcal{W}_m^R$ is the set of $R_m$-witnesses. We think of $b\in {\cal O}^R_m$ as being a solution to $a\in\{0,1\}^m$ for $R_m$ if $R_m(a,b;c)$ holds for all $c\in {\cal W}^R_m$. 
\end{notation}

A weaker notion of reducibility between $\TF\Sigma_2$ problems has been considered, known as \emph{counter-example reducibility} \cite{Thapen2024}.
Such a reduction from a search problem $Q$ to a search problem $R$ can be interpreted as the following game: Suppose Alice wants to solve $Q$ on input $x$. She heard that Bob is actually able to solve $R$ using some solver $S$, and thus, using the reduction, she sends $f(x)$ to Bob. In turns, Bob uses his solver $b:=S\big(f(x)\big)$ and Alice transforms $b$ into the output $y:= g(x,b)$. Now, if $y$ is indeed an output for $x$, Alice got what she wanted. But let us say that Alice does not trust the solution and, after some searching, is able to find $z$ such that $Q(x,y;z)$ does not hold (and hence $y$ is not a correct output for $x$). 
She confronts Bob about the issue, claiming that Bob's solver is not correct. 
Bob, who is very proud of his solver, dismisses her by asking for a proof.
Then by computing $c:= h(x,b,z)$, Alice is capable of finding a counter-example to $b$ being a correct output for $f(x)$, and hence proving that $S$ is incorrect. 

\begin{definition}\label{def:c_reduc}
    Let $Q$ and $R$ be two $\TF\Sigma_2^{dt}$ problems.
    A \emph{counter-example reduction from $Q$ to $R$} or a \emph{counter-example $R$-formulation of $Q$} is an $R$-formulation $(f,g)$ along with, for each $b\in \mathcal{O}^R_{s(n)}$ and $z\in\mathcal{W}^Q_n$, a function $h_{b,z}:\{0,1\}^n\rightarrow \mathcal{W}^R_{s(n)}$ such that $$R_{s(n)}\big(f(x),b;h_{b,z}(x)\big)\Longrightarrow Q_n\big(x,g_b(x);z\big)$$
    where $s(n)$ is the size of the reduction.
    The depth, and complexity of the formulation are defined as in \autoref{def:many_one_red}.
\end{definition}
By abuse of notation, an $R$-formulation of a problem $Q$ will also be said to be a counter-example reduction if there exists a function $h$ such that the triplet $(f,g,h)$ is a counter-example reduction.
We also define another type of reduction that is the translation of the $\Sigma_2$-weakenings.
\begin{definition}\label{def:weakening4}
    Let $Q$ and $R$ be two $\TF\Sigma_2^{dt}$ search problems and let $(f,g)$ be an $R$-formulation of $Q$.
    We say that $(f,g)$ is a \emph{weakening} if 
    \begin{itemize}
        \item $f$ is the identity,
        \item for each $b$, $g_b$ is a constant function.
    \end{itemize}
    In particular, weakenings are formulations that are always of size $n$ and depth $1$.
\end{definition}

\begin{proposition}\label{prop:weaknening_equiv}
    Let $Q$ and $R$ be two $\TF\Sigma_2^{dt}$ search problems.
    Then the following are equivalent:
    \begin{itemize}
        \item There exists a weakening from $Q$ to $R$,
        \item $F_{R_n}$ is a $\Sigma_2$-weakening of $F_{Q_n}$ for every $n$.
    \end{itemize}
\end{proposition}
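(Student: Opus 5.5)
The plan is to unfold both sides through the encoding of \autoref{lem:FFequiv}. There, the formula $F_{Q_n}$ has one DNF $D^Q_o$ for each output $o\in\mathcal{O}^Q_n$. Writing $D^Q_o := \bigvee_{z\in\mathcal{W}^Q_n} \neg Q_n(\cdot,o;z)$, with each disjunct expanded as a DNF of the decision tree computing $\neg Q_n(x,o;z)$, we have $D^Q_o(x)=1$ iff $o$ is \emph{not} a solution for $x$. Thus $\FF_{F_Q}(x,o)$, i.e.\ $D^Q_o(x)=0$, says exactly that $o$ is a solution. The same holds for $R$. Both directions then reduce to matching the implication in \autoref{def:many_one_red} with the semantic implication $D\implies D_i$ in \autoref{def:weakening}. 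Here a collection $G$ is a $\Sigma_2$-weakening of $F$ if every member of $G$ is implied by some member of $F$. For problems of the form $\FF_F$ the reduction must also be total in the direction of $F$'s outputs, and I will use this reading throughout.

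\emph{(Weakening $\Rightarrow$ $\Sigma_2$-weakening.)} Suppose $(f,g)$ is a weakening: $f=\mathrm{id}$ and $g_b\equiv o(b)$ is constant. The formulation condition becomes $R_n(x,b)\Rightarrow Q_n(x,o(b))$ for all $x$. Contrapositively, $\neg Q_n(x,o(b))\Rightarrow \neg R_n(x,b)$, that is, $D^Q_{o(b)}\implies D^R_b$ pointwise on $\{0,1\}^n$. Hence every DNF $D^R_b$ of $F_{R_n}$ is implied by some DNF of $F_{Q_n}$. So $F_{R_n}$ is a union, over $o$, of the families $\{D^R_b : o(b)=o\}$, each of which is a $\Sigma_2$-weakening of the single DNF $D^Q_o$. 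This is precisely a $\Sigma_2$-weakening of $F_{Q_n}$ in the sense of \autoref{def:weakening}. Note that DNFs $D^Q_o$ with no $b$ mapped to them simply contribute the empty weakening.

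\emph{($\Sigma_2$-weakening $\Rightarrow$ weakening.)} Conversely, suppose $F_{R_n}$ is a $\Sigma_2$-weakening of $F_{Q_n}$ on the same variables $x_1,\dots,x_n$. Then for each $b$ we can choose some $o(b)$ with $D^Q_{o(b)}\implies D^R_b$. Set $f=\mathrm{id}$ and $g_b\equiv o(b)$. The same contrapositive computation shows that if $b$ is a solution of $R_n$ on $x$, i.e.\ $D^R_b(x)=0$, then $D^Q_{o(b)}(x)=0$, so $o(b)$ is a solution of $Q_n$ on $x$. These maps have depth $0$ or $1$ (the identity queries one bit) and size $n$, as in \autoref{def:weakening4}.

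\emph{Main obstacle.} The only delicate point is bookkeeping about the encoding. First, the correspondence between outputs $o$ and DNFs $D_o$ must be exactly the one fixed by \autoref{lem:FFequiv}. In particular, the input length of $R$ must be $n$ so that $f=\mathrm{id}$ makes sense, and this is part of both hypotheses. Second, the $\Sigma_2$-weakening relation is semantic (implication over all of $\{0,1\}^n$) and not syntactic. I would state explicitly that ``$D\implies D'$'' is taken pointwise on assignments, which is what makes the two contrapositives above line up exactly. No degree or width considerations enter, since the notion of weakening here places no complexity bound on $g$'s choice of constant.
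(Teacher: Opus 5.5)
Your proof is correct and follows the same route as the paper. In both directions you match the formulation condition for $f=\mathrm{id}$ and constant $g_b$ with the pointwise implication $\bigvee_z\overline{Q}_n(x,g_b;z)\implies\bigvee_c\overline{R}_n(x,b;c)$, and for the converse you set $g_b$ to be a witnessing output $y_b$; your extra bookkeeping (the explicit contrapositive, empty weakenings for unused axioms, equal input lengths) only spells out what the paper's two-line proof leaves implicit.
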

For a decision tree $T(x)$ whose leaves are labeled with $0$ and $1$, we will abuse notation and also denote by $T(x)$ the DNF obtained from taking the OR over the terms corresponding to its accepting paths.
We also write $\overline{T}(x)$ for the function $1-T(x)$. Its decision tree is the same as $T$ but with inverted labels and its DNF is obtained from taking the OR over the terms corresponding to the rejecting paths of $T$.
\begin{proof}[Proof of \autoref{prop:weaknening_equiv}]
    Writing $s := s(n)$, say we have a weakening $R$-formulation $(f,g)$ of $Q$.
    Then for each $b\in\mathcal{O}^R_{s}$ the axiom $\bigvee_c\overline{R}_{n}(x,b;c)$ is a $\Sigma_2$-weakening of $\bigvee_z\overline{Q}_{n}(x,g_b;z)$ by the correctness of the reduction, where $c$ ranges over $\mathcal{W}^R_{s}$ and $z$ over $\mathcal{W}^Q_n$.
    If $F_{R_n}$ is a $\Sigma_2$-weakening of $F_{Q_n}$, then set $g_b=y_b$ such that $\bigvee_c\overline{R}_{n}(x,b;c)$ is a weakening of $\bigvee_z\overline{Q}_{n}(x,y_b,z)$.
\end{proof}
Considering only weakenings and counter-example reductions seems very restrictive, as the correctness of counter-example reductions is efficiently verifiable (since they actually occur at the $\TFNP$ level as highlighted in \cite{Thapen2024}) and weakenings are somewhat trivial.
Surprisingly enough, they seem to capture reductions in their entirety in the sense of the following lemma.
\begin{lemma}
\phantomsection\label{lem:factorization}
    Let $Q$ and $R$ be two $\TF\Sigma_2^{dt}$ problems and let $(f,g)$ be an $R$-formulation of $Q$.
    Then there exists $P\in\TF\Sigma_2^{dt}$ such that $(f,g)$ decomposes into two reductions $Q\rightarrow P\rightarrow R$ and:
    \begin{itemize}
        \item the reduction $Q\rightarrow P$ is a weakening,
        \item the reduction $P\rightarrow R$ is a counter-example reduction.
    \end{itemize}
\end{lemma}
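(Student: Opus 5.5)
The plan is to define the intermediate problem $P$ directly from the data of the formulation $(f,g)$, so that $P$ has the same input as $Q$ and the same outputs and witnesses as $R$. Concretely, let $P_n \subseteq \{0,1\}^n \times \mathcal{O}^R_{s(n)} \times \mathcal{W}^R_{s(n)}$ be given by
\[ P_n(x,b;c) \;:\iff\; R_{s(n)}\big(f(x),b;c\big). \]
That is, $P$ is $R$ precomposed with $f$, with the witness quantifier left untouched. Since each $f_i$ has $\polylog(n)$ depth and $R$ is verified by $\polylog$-depth trees $V_{b,c}$, the verifier $V_{b,c}\circ f$ is again a $\polylog(n)$-depth decision tree. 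The witness length is still $\polylog(n)$ when measured against $n$, because $s(n)\le 2^{\polylog(n)}$. Totality of $P$ follows from totality of $R$, so $P\in\TF\Sigma_2^{dt}$.

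Next I would check the reduction $P\to R$. Take it to be $(f,\mathrm{id})$, i.e.\ $g^{P}_b(x)=b$, together with the identity counter-example map $h_{b,c}(x)=c$. Then $R(f(x),b;h_{b,c}(x))$ is literally $P(x,b;c)$, so the implication required in \autoref{def:c_reduc} holds trivially. Its depth is that of $f$ and its size is $s(n)$, so this is a counter-example reduction of the same complexity as the original formulation.

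The only real content is the reduction $Q\to P$. I would take $f=\mathrm{id}$ and, for each output $b$ of $P$, the constant map to the value $g_b(x)$. The obstacle is that $g_b$ is not constant in $x$, while \autoref{def:weakening4} demands that it be. To handle this, I would enlarge the output set of $P$ so that outputs remember the answer they will be mapped to. Set $\mathcal{O}^P_n=\{(b,y): b\in\mathcal{O}^R_{s(n)},\, y\in\mathcal{O}^Q_n\}$ and
\[ P_n\big(x,(b,y);c\big) :\iff R_{s(n)}\big(f(x),b;c\big)\wedge g_b(x)=y. \]
Here $g_b(x)=y$ is decided by a $\polylog$-depth tree, and $P$ remains total because $(b,g_b(x))$ is a solution whenever $b$ solves $f(x)$.

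With this definition, the weakening $Q\to P$ maps $(b,y)\mapsto y$, which is constant as required. Correctness follows from correctness of $(f,g)$: if $P(x,(b,y);c)$ holds for all $c$, then $b$ solves $f(x)$ for $R$, hence $g_b(x)=y$ solves $x$ for $Q$. The reduction $P\to R$ becomes $(f,g')$ with $g'_b(x)=(b,g_b(x))$ and $h_{b,c}=c$. For the counter-example condition, suppose $R(f(x),b;c)$ holds. The conjunct $g_b(x)=g'_b(x)$ holds automatically, so $P(x,g'_b(x);c)$ holds. Finally, composing the two reductions gives $(f,g)$, which is the claimed decomposition.

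To finish, I would verify the complexity accounting: the depth of $g'$ is at most that of $g$, and $|\mathcal{O}^P_n|$ stays quasi-polynomial. If a reformulation through \autoref{prop:weaknening_equiv} is wanted, the weakening can be checked at the level of $\Sigma_2$-weakenings.
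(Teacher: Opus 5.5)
Your proposal is correct and is essentially the paper's proof. Your enlarged problem $P_n(x,(b,y);c)\iff R_{s(n)}(f(x),b;c)\wedge g_b(x)=y$ is exactly the paper's reduced problem $R(f,g)$, and your two reductions (the weakening $(b,y)\mapsto y$, and the counter-example reduction $(f,\,b\mapsto(b,g_b(x)),\,h_{b,c}=c)$) are the same ones the paper uses. The only cosmetic difference is that the paper certifies the weakening at the formula level, by showing each axiom of $F_{R(f,g)_n}$ is implied by an axiom of $F_{Q_n}$, whereas you check the formulation-level definition directly, which is equivalent by \autoref{prop:weaknening_equiv}.
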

This decomposition lemma  mirrors the characterization given by \hyperref[thm:wRACharact]{Theorem~\ref{thm:wRACharact}}: One part corresponds to taking a weakening of a formula while the other has a validity that can be efficiently verified (on one hand $\SA$-refutations and on the other counter-example reductions---See more about this in \hyperref[subsect:c_reduc]{Section \ref{subsect:c_reduc}}). 
In order to prove this lemma, we first need to introduce the \emph{reduced problem} associated with a reduction, as it will be our choice for $P$.

For a search problem $R_m(a,b;c)$, we write $R_{m,b,c}(a)$ for the decision tree computing $R_m(a,b;c)$ to put an emphasis on the fact that its only variable is $a$.
\begin{definition}\label{def:reduced_pb}
    Let $Q$ and $R$ be two $\TF\Sigma_2^{dt}$ search problems. 
    For $(f,g)$ an $R$-formulation of $Q$ of size $s:=s(n)$, the \emph{reduced problem $R(f,g)$} is the sequence of ternary relations $R(f,g)_n\subseteq\{0,1\}^n\times (\mathcal{O}^Q_n\times \mathcal{O}^R_{s})\times \mathcal{W}^R_{s}$ given by $$R(f,g)_n\big(x,(y,b);c\big) \Longleftrightarrow g_b(x) = y\ \mathrm{and}\ R_{s}\big(f(x),b;c\big).$$
    The formula associated to the reduced problem $F_{R(f,g)_n}$ has axioms $$\overline{G}_{b,y}(x)\vee\bigvee_c\overline{R}_{s,b,c}\circ f(x)$$ and $G_{b,y}(x)$ is the indicator function for the event $g_b(x) = y$. We see $G_{b,y}(x)$ as the decision tree (and hence can also be thought of as a DNF) obtained from the decision tree computing $g_b$ and replacing the labels of its leaf with 1 if the leaf is labeled with $y$ and 0 otherwise. 
\end{definition}
The formula associated to the reduced problem $R(f,g)$ states ``the reduction given by $(f,g)$ is not correct'', and for the tuple $(b,y)$, the meaning of the corresponding axiom is ``either $g_b(x)\neq y$, or there is a counter-example $c$ to $b$ being an $R$-output for $f(x)$''. The formula encodes the following procedure: Say we are given a $Q$-input $x$ and a purported output $y$. To verify that $y$ is a $Q$-output for $x$, instead of computing $Q(x,y;z)$ for each $z$, one can do the following: compute $f(x)$ and, given $b$ an $R$-output for $f(x)$, compute $R(f(x),b;c)$ for each $c$ and verify that $g_b(x) = y$. 

\begin{proof}[Proof of {\hyperref[lem:factorization]{Lemma~\ref*{lem:factorization}}}]
    Let $Q$, $R$, and $(f,g)$ be as in the statement of the lemma, set $P$ to be $R(f,g)$ and denote by $s := s(n)$ the size of the reduction.
    The formula $F_{R(f,g)_n}$ has an axiom $\overline{G}_{b,y}(x)\vee\bigvee_{c}\overline{R}_{s,b,c}\circ f(x)$ for every pair $y,b$, where the variable $c$ ranges over $\mathcal{W}^R_{s}$, and $F_{Q_n}$ has an axiom $\bigvee_{z}\overline{Q}_{n,y,z}(x)$ for each $y$ where the variable $z$ ranges over $\mathcal{W}^Q_n$.   
    By the correctness of the reduction (if $y=g_b(x)$ and $b$ is an $R$-output for $f(x)$, then $y$ is a $Q$-output for $x$), we get that the formula $$\bigvee_{z}\overline{Q}_{n,y,z}(x)\Longrightarrow \overline{G}_{b,y}(x)\vee\bigvee_{c}\overline{R}_{s,b,c}\circ f(x)$$ is a tautology and $F_{R(f,g)_n}$ is indeed a $\Sigma_2$-weakening of $F_{Q_n}$. 
    
    Now, setting $f'(x) = f(x)$ for all $x$, $g'_b(x) = \big(g_b(x),b\big)$ for all $b$, and $h'_{b,c}(x) = c$ for each pair $b,c$, the triplet $(f',g',h')$ forms a counter-example $R$-formulation of $R(f,g)$ since for every $x,y,b$, and $c$ we get $$R_{s}\big(f'(x), b;h'_{b,c}(x)\big)=R_{s,b,c}\circ f(x)=R(f,g)_n\big(x,(b,g_b(x));c)=R(f,g)_n\big(x,g'_b(x);c\big). \qedhere$$
\end{proof}

\subsection{Proving \autoref{lem:LOP_non_reduc}}

\begin{proposition}\label{prop:SA_proof_reduced_problem}
    Let $Q$ and $R$ be two $\TF\Sigma_2^{dt}$ search problems and let $(f,g)$ be an $R$-formulation of $Q$ of size $s(n)$ and depth $d(n)$.
    If $F_{R_m}$ admits an $\SA$-refutation of degree $d'(m)$, then $F_{R(f,g)_n}$ admits one of degree $d(n)(d'(s(n))+2)$.
\end{proposition}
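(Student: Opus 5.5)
We start from a degree-$d'(s)$ $\SA$-refutation of $F_{R_s}$, with $s:=s(n)$:
\[ \sum_{b} \Big(\sum_{c}\overline{R}_{s,b,c}(a) - 1\Big)\mathcal{J}_b(a) + \mathcal{J}(a) = -1, \]
where $a$ ranges over the $s$ input bits of $R_s$. The plan is to compose this identity with $f$, i.e.\ substitute $a_i \mapsto f_i(x)$, and then repair the axioms so that they match those of $F_{R(f,g)_n}$. Here I am treating the axioms of $F_{R_s}$ directly as the DNFs $\bigvee_c \overline{R}_{s,b,c}$; the $\Sigma_2$-weakening plays no role in this direction, since we are given an ordinary $\SA$-refutation of $R$.

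\textbf{Substitution step.} Each $f_i$ is computed by a decision tree of depth $\le d$, so we write $f_i(x)=\sum_{p} p(x)$ over its accepting paths and $1-f_i(x)=\sum_{p'}p'(x)$ over its rejecting paths. Both are conical juntas of degree $\le d$, and they are exactly equal as multilinear polynomials on the cube.

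A conjunct $t(a)$ of degree $k$ becomes a product of $k$ such sums. Expanding gives a conical junta whose terms are conjunctions of at most $k$ paths, hence of degree $\le kd$. Conjunctions of inconsistent paths are $0$ and are simply dropped. Thus $\mathcal{J}\circ f$ and $\mathcal{J}_b\circ f$ are conical juntas of degree $\le d\cdot d'$.

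Similarly, $\overline{R}_{s,b,c}\circ f$ is a sum of conjuncts of degree $\le d\cdot \deg$. Its DNF is $\bigvee_c \overline{R}_{s,b,c}\circ f$, where each composed term is expanded into a disjunction of path-conjunctions. The identity is preserved after substitution, because we substitute polynomials that agree on $\{0,1\}^n$ and everything is reduced multilinearly.

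\textbf{Fixing the axioms.} The axioms of $F_{R(f,g)_n}$ are $\overline{G}_{b,y}\vee\bigvee_c\overline{R}_{s,b,c}\circ f$, one for each pair $(b,y)$. Note that $\sum_y G_{b,y}(x)=1$ identically, since $G_{b,y}$ is a sum of the leaves of the tree for $g_b$ labelled $y$, and these leaves partition the cube. So for each $b$ we split
\[ \Big(\sum_c \overline{R}_{s,b,c}\circ f -1\Big)\mathcal{J}_b\circ f = \sum_y \Big(\sum_c \overline{R}_{s,b,c}\circ f -1\Big) G_{b,y}\,\mathcal{J}_b\circ f. \]
Next we rewrite $\big(\sum_c \overline{R}\circ f -1\big)G_{b,y}$ as $\big(\sum_c \overline{R}\circ f + \overline{G}_{b,y} -1\big)G_{b,y} - \overline{G}_{b,y}G_{b,y}$. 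The correction term $\overline{G}_{b,y}G_{b,y}$ is a sum of products of a rejecting path and an accepting path of the same tree. Every such product is inconsistent, hence identically $0$ multilinearly, so it vanishes.

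This yields a refutation with multiplier juntas $G_{b,y}\cdot(\mathcal{J}_b\circ f)$ for the axiom $(b,y)$, and the same $\mathcal{J}\circ f$.

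\textbf{Degree accounting, the main obstacle.} We must check that every product stays within $d(d'+2)$. The axiom terms have degree $\le d\,d'$. The term $\overline{G}_{b,y}$ has degree $\le d$, since $g_b$ has depth $\le d$. The multiplier $G_{b,y}\,\mathcal{J}_b\circ f$ has degree $\le d + d\,d'$. The product is therefore at most $d\,d' + d + d$ after multilinear merging, which is within $d(d'+2)$. I expect the delicate point to be this bound on products of expanded paths: overlapping variables only reduce degree, and we must make sure no step requires negative coefficients. That holds because only path-sums and their products appear, and the one subtraction used cancels exactly to zero.
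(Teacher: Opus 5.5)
Your proposal is correct and is essentially the paper's proof. Like the paper, you compose the refutation of $F_{R_s}$ with $f$, split each multiplier via $\sum_y G_{b,y}=1$, absorb $\overline{G}_{b,y}$ using $G_{b,y}\overline{G}_{b,y}=0$, take the multipliers $G_{b,y}\cdot(\mathcal{J}_b\circ f)$ and $\mathcal{J}\circ f$, and conclude the identity by uniqueness of multilinear representations.

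There is one small slip in the degree count. The bound $dd'+2d$ does not follow from the separate bounds $dd'$ on the axiom terms and $d+dd'$ on the multiplier; those alone only give $2dd'+d$. Instead, use $\deg(\overline{R}_{s,b,c}\mathcal{J}_b)\le d'$ from the original refutation. Since distinct paths of the same tree are inconsistent, the composed product then has degree at most $dd'$, and the factors $G_{b,y}$ and $\overline{G}_{b,y}$ add at most $2d$.
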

In the following proof, for a polynomial $p(a)= p(a_1,\ldots,a_m)$ over $m$ variables and a collection of $m$ decision trees $f(x)=(f_i(x))_{i\in[m]} $, the polynomial $p\circ f(x)$ is $p\big(f_1(x),\ldots,f_m(x)\big)$ where each $f_i$ is seen as the sum of the monomials representing its accepting paths. Note that when $p$ is a junta, the resulting polynomial can also be represented by a junta. Indeed, for a term $t(a)$, one can compute $t\circ f(x)$ by querying $f_i$ if the variable $a_i$ appears in $t$. This gives a decision tree that can in turn be transformed into a conical junta. The degree of $p\circ f(x)$ is equal to $\deg(p)\cdot\depth(f)$.
We also define $T\circ f(x)$ a similar way when $T$ is a decision tree and its depth is equal to $\depth(T)\cdot\depth(f)$.
\begin{proof}
    Writing $s:= s(n)$, let us first recall the axioms (in their polynomial inequality form) of the two formulas.
    \begin{itemize}
        \item Axioms of $F_{R_s}$: $\sum_c\overline{R}_{s,b,c}(a)-1\geq 0$ for each $b\in \mathcal{O}^R_{s}$ with $c$ ranging over $\mathcal{W}^R_{s}$;
        \item Axioms of $F_{R(f,g)_n}$: $\overline{G}_{b,y}(x) + \sum_c\overline{R}_{s,b,c}\circ f(x)-1\geq 0$ for each pair $b\in\mathcal{O}^R_{s}$ and $y\in\mathcal{O}^Q_n$ with $c$ ranging over $\mathcal{W}_{s}^R$. Recall that each axiom encodes the fact that either $g_b(x)\neq y$ or $c$ witnesses that $b$ is not an $R$-output of $f(x)$ and their conjunction contradicts the correctness of the reduction.
    \end{itemize}
    As the axioms of $F_{R(f,g)_n}$ closely resemble the ones of $F_{R_{s}}$, our goal is to transform the $\SA$-refutation of the later formula to one of the former.
    The idea is pretty simple: compose every polynomial appearing in the refutation of $R_{s}$ with the function $f$ and the function computed by the sum should still be $-1$.
    Let $\sum_bJ_b(a)(\sum_c\overline{R}_{s,b,c}(a) -1) + J(a) = -1$ be an $\SA$-refutation of $R_{s}$.
    We explain how to deduce an $\SA$-refutation step by step. 
    \begin{enumerate}
        \item For each axiom $\overline{G}_{b,y}(x) + \sum_c\overline{R}_{s,b,c}\circ f(x) -1$, multiply by $G_{b,y}(x)$. Since $T(x)\cdot\overline{T}(x) = 0$ for any decision tree, we get $G_{b,y}(x)\cdot(\sum_c\overline{R}_{s,b,c}\circ f(x)-1)$ for each tuple $(b,y)$.
        \item Using the fact that $\sum_y G_{b,y} = 1$ (as we sum over the indicator variables of all paths of the decision tree $g_b$, and every assignment must be consistent with at least one of those paths), summing over $y$ and multiplying by $J_b\circ f$ gives us $$J_b\circ f(x)\Big(\sum_c\overline{R}_{s,b,c}\circ f(x)-1\Big).$$
        \item Finally, summing over all $b$ and summing $J\circ f(x)$, the result becomes $$\sum_bJ_b\circ f(x)\Big(\sum_c\overline{R}_{s,b,c}\circ f(x)-1\Big)+J\circ f(x)$$
    \end{enumerate}
    This actually gives a valid $\SA$-refutation since we (1) Multiply our axioms by conical juntas, (2) Sum all the results, and (3) Add an extra junta to the sum.
    Looking at this polynomial as a function over the boolean cube, we get the following: let $\chi$ be an assignment of $x$ and $\alpha$ be the resulting assignment $f(\chi)$ of $a$.
     Then $$\sum_bJ_b\circ f(\chi)\Big(\sum_c\overline{R}_{s,b,c}\circ f(\chi)-1\Big)+J\circ f(\chi) = \sum_bJ_b(\alpha)\Big(\sum_c\overline{R}_{s,b,c}(\alpha)-1\Big)+J(\alpha) =-1.$$
    Since functional equality is the same as polynomial equality in multi-linear arithmetic we get the equality $\sum_bJ_b\circ f(x)\Big(\sum_c\overline{R}_{s,b,c}\circ f(x)-1\Big)+J\circ f(x)=-1.$
    Hence, setting $J'_{b,y}(x) = J_b\circ f(x)\cdot G_{b,y}(x)$ and $J'(x) = J\circ f(x)$, we get the $\SA$-refutation $$\sum_{b,y}J'_{b,y}(x)\Big(\overline{G}_{b,y}(x)+\sum_c\overline{R}_{s,b,c}\circ f(x) -1\Big)+ J'(x)=-1$$ of $F_{R(f,g)_n}.$
    The degree of the refutation is $d(n)(d'(s(n))+2)$.
\end{proof}
Using this last result, we are able to prove \autoref{lem:LOP_non_reduc}.
\begin{proof}[Proof of \autoref{lem:LOP_non_reduc}]
    Let $R$ be a $\TF\Sigma_2^{dt}$ search problem such that $R_m$ admits a $\SA$-refutation of poly-logarithmic degree $d'(m)$ and let $(f,g)$ be an $R$-formulation of $\LOP$ of size $s(n)$ and degree $d(n)$.
    Then by \hyperref[lem:factorization]{Lemma~\ref*{lem:factorization}}, we have an $R$-formulation of similar size and depth of $R(f,g)$ and by the same lemma as well as \autoref{prop:SA_proof_reduced_problem} and the proof of \autoref{thm:LOPnotsRA}, we have
    \begin{itemize}
        \item $R(f,g)_n$ is a weakening of $\LOP_n$ for each $n$ and hence admits a super-poly-logarithmic lower bound on the degree of its $\SA$-refutations;
        \item $R(f,g)_n$ admits an $\SA$-refutation of degree $d(n)(d'(s(n))+2)$.
    \end{itemize}
    We then conclude that either $s(n)$ is super quasi-polynomial or either $d(n)$ is super poly-logarithmic since the sum $d(n)(d'(s(n))+2)$ is super poly-logarithmic and $d'$ is poly-logarithmic and hence the $R$-formulation is not efficient.
\end{proof}

\section{Open Problems}
\paragraph{Separating LeastNumber from StrongAvoid.}
Can we show that $\Min \not \leq_{dt} \sRA$?
While $\Min$ has a low-degree $\SA$ proof, the size of this proof is exponential, and therefore this does not imply a reduction to $\sRA$. The existence of a low-degree $\SA$ proof precludes us from using the techniques in this paper to obtain this separation. Furthermore, to our knowledge, all of the size lower bounds that have been proven for Sherali-Adams proceed by first proving a degree lower bound via pseudo-expectations, and then applying the size-degree tradeoff \cite{PitassiS12}. The only lower bounds which do not proceed via this strategy apply strictly to $\SA$ \cite{GoosHJMPRT22,RezendePR23}, they do not apply to the general Sherali-Adams proof system. 
Can these techniques be used in order to separate $\Min$ from $\sRA$?

\paragraph{Counter-Example Reductions and Herbrandization.}\phantomsection\label{subsect:c_reduc}

Thapen \cite{Thapen2024} gave a notion of \emph{herbrandization} for $\TF\Sigma_2^{dt}$ problems, which associates a $\TF\Sigma_2$ problem $R$ with a $\TFNP$ problem $\mathrm{Checkable}(R)$ satisfying the following property.
\begin{lemma}[\cite{Thapen2024}]\label{lem:c_reduc_and_tfnp}
    Let $Q, R \in \TF\Sigma_2^{dt}$.
    If $Q$ admits low-complexity counter-example $R$-formulation, then $\mathrm{Checkable}(Q)$ admits a low-complexity $\mathrm{Checkable}(R)$-formulation.
\end{lemma}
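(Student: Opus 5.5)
We will work with the natural herbrandized form of $\mathrm{Checkable}(\cdot)$, which is our assumption about Thapen's construction. An instance of $\mathrm{Checkable}(R)_m$ is a pair $(a,C)$. Here $a\in\{0,1\}^m$ is an $R_m$-input and $C$ is a table assigning to every $b\in\mathcal{O}^R_m$ a purported counter-example $C(b)\in\mathcal{W}^R_m$, written in binary. A solution is an output $b$ with $R_m(a,b;C(b))$. This problem is total because $R$ is: a genuine $R$-output survives every counter-example, in particular $C(b)$. It lies in $\TFNP^{dt}$ because checking $R_m(a,b;C(b))$ only requires reading the $\polylog$ bits of $C(b)$ and then running the $\polylog$-depth tree $R_{m,b,C(b)}$ on $a$.

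Fix a low-complexity counter-example $R$-formulation $(f,g,h)$ of $Q$, of size $s=s(n)$ and depth $d=d(n)$. The plan is to translate it directly into a $\mathrm{Checkable}(R)$-formulation of $\mathrm{Checkable}(Q)$. Given an input $(x,Z)$ of $\mathrm{Checkable}(Q)_n$, with $Z:\mathcal{O}^Q_n\to\mathcal{W}^Q_n$, we build the $\mathrm{Checkable}(R)_s$-instance $(f(x),C)$, where
\[ C(b) \;:=\; h_{b,\,Z(g_b(x))}(x) \qquad\text{for every } b\in\mathcal{O}^R_s. \]
In words, for each candidate $R$-output $b$ we simulate Alice's strategy. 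We compute her $Q$-answer $y=g_b(x)$, look up the counter-example $z=Z(y)$ that the table offers against $y$, and convert it into the counter-example $h_{b,z}(x)$ against $b$. The output map sends a solution $b$ to $g_b(x)$.

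Correctness follows in one line from \autoref{def:c_reduc}. Suppose $b$ is a solution of the constructed instance, so $R_s(f(x),b;C(b))$ holds. Set $z=Z(g_b(x))$. Then $R_s\big(f(x),b;h_{b,z}(x)\big)$ holds, and the counter-example property gives $Q_n(x,g_b(x);z)$. Hence $g_b(x)$ satisfies $Q_n\big(x,g_b(x);Z(g_b(x))\big)$, which is exactly the condition for being a solution of $\mathrm{Checkable}(Q)$ on $(x,Z)$.

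The remaining step is the complexity accounting, which is where care is needed. Each bit of $f(x)$ is a depth-$d$ tree in $x$. Each bit of $C(b)$ is computed as follows. First run the depth-$d$ tree for $g_b$ to learn $y$. Then read the $\log|\mathcal{W}^Q_n|=\polylog(n)$ bits of $Z(y)$ to learn $z$. Finally run the depth-$d$ tree for the relevant bit of $h_{b,z}$ on $x$. The total depth is $2d+\polylog(n)$. The output map $b\mapsto g_b(x)$ has depth $d$. The new instance has size $s+|\mathcal{O}^R_s|\cdot\log|\mathcal{W}^R_s|$. This is quasi-polynomial, since $|\mathcal{O}^R_s|\le 2^{\polylog(s)}$ in the false-formula normal form of \autoref{lem:FFequiv}, so its logarithm stays $\polylog(n)$.

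The main obstacle I expect is making this bookkeeping match Thapen's exact encoding of $\mathrm{Checkable}$. His definition may present the counter-example table as a Skolem function queried bitwise, or may include padding and an invalid-encoding case. In that case the same construction should still go through: we only need to handle malformed table entries by mapping them to an arbitrary fixed witness, which costs constant extra depth.
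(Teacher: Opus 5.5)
The paper gives no proof of this lemma, since it is quoted from Thapen, so there is no in-paper argument to compare against. Your construction is the standard argument for lifting a counter-example reduction to the herbrandized problems, and your correctness argument is right. It feeds $\mathrm{Checkable}(R)_s$ the instance $(f(x),C)$ with $C(b):=h_{b,Z(g_b(x))}(x)$ and maps a solution $b$ back to $g_b(x)$.

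Two remarks on the complexity accounting, neither of which breaks the proof.

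First, your depth bound for the bits of $C(b)$ needs each $h_{b,z}$ to be a shallow decision tree. This is exactly where the counter-example structure, rather than the bare formulation $(f,g)$, is used. If $h$ were unrestricted, every $R$-formulation of $Q$ would qualify as a counter-example reduction: let $h_{b,z}(x)$ be a witness refuting $b$ on $f(x)$ whenever $b$ is not an $R$-solution. So shallowness of $h$ is the entire content of the hypothesis. As written, \autoref{def:c_reduc} inherits its depth measure from \autoref{def:many_one_red}, which only mentions $f$ and $g$. You should therefore state $\depth(h_{b,z})\le d$ explicitly as part of ``low complexity''.

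Second, the bound $\log|\mathcal{O}^R_s|\le\mathrm{polylog}(s)$ is not supplied by \autoref{lem:FFequiv}. That lemma puts no bound on the number $t$ of DNFs, which is just $|\mathcal{O}^R_s|$ again. The bound is instead the standing convention that outputs are $\mathrm{polylog}$-bit strings, and you should invoke it as such.

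With these two assumptions made explicit, your estimates go through: depth $2d+\mathrm{polylog}(n)$ and log-size $\mathrm{polylog}(n)$. These are measured against the input length of $\mathrm{Checkable}(Q)_n$, which is at least $n$.
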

This previous lemma, coupled with our decomposition (\hyperref[lem:factorization]{Lemma~\ref*{lem:factorization}}), implies the following. 
    \begin{lemma}
        If for every weakening $P$ of $Q \in \TF\Sigma_2^{dt}$, $\mathrm{Checkable}(P)\nleq_{dt} \mathrm{Checkable}(R)$, then $Q\nleq_{dt}R$.
    \end{lemma}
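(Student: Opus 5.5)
We prove the contrapositive: suppose $Q\leq_{dt}R$ through an efficient $R$-formulation $(f,g)$, and then exhibit a weakening $P$ of $Q$ for which $\mathrm{Checkable}(P)\leq_{dt}\mathrm{Checkable}(R)$. This contradicts the hypothesis. The whole argument consists of combining \hyperref[lem:factorization]{Lemma~\ref*{lem:factorization}} with \autoref{lem:c_reduc_and_tfnp}.

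\emph{Step 1 (factor the reduction).} Apply \hyperref[lem:factorization]{Lemma~\ref*{lem:factorization}} to $(f,g)$ and take $P:=R(f,g)$, the reduced problem of \autoref{def:reduced_pb}. The lemma supplies two things. First, the map $Q\to P$ is a weakening, so $P$ lies in the class the hypothesis quantifies over. Here I read ``weakening $P$ of $Q$'' in the sense of \autoref{def:weakening4}; by \autoref{prop:weaknening_equiv} this is the same as $F_{P_n}$ being a $\Sigma_2$-weakening of $F_{Q_n}$, so either reading works. Second, the map $P\to R$ is a counter-example $R$-formulation $(f',g',h')$ with $f'=f$, $g'_b(x)=(g_b(x),b)$ and $h'_{b,c}(x)=c$.

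\emph{Step 2 (check complexity).} We must confirm that this counter-example formulation has low complexity. Its size is $s(n)$, the same as the original formulation. The function $f'=f$ has the original depth. Each $h'_{b,c}$ is constant, so it has depth $0$. Each $g'_b$ needs only a decision tree for $g_b$, so it also has the original depth. Hence the complexity is $\polylog(n)$. One caveat: the output set of $P$ is $\mathcal{O}^Q_n\times\mathcal{O}^R_s$, and we must check that $P\in\TF\Sigma_2^{dt}$ with polylog-size witnesses and verifiers. This holds because verifying $(y,b)$ amounts to evaluating $g_b$ and $R_s(f(x),b;c)$, both of which are compositions of polylog-depth trees.

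\emph{Step 3 (lift to $\TFNP$).} Apply \autoref{lem:c_reduc_and_tfnp} to this counter-example formulation. It gives a low-complexity $\mathrm{Checkable}(R)$-formulation of $\mathrm{Checkable}(P)$, that is, $\mathrm{Checkable}(P)\leq_{dt}\mathrm{Checkable}(R)$. Since $P$ is a weakening of $Q$, this contradicts the hypothesis, and therefore $Q\nleq_{dt}R$.

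The main obstacle is Step 2. We have to make sure the reduced problem $R(f,g)$ genuinely belongs to $\TF\Sigma_2^{dt}$: the earlier result is stated for problems in that class, and totality of $P$ must follow from totality of $R$. We also have to make sure the complexity parameters survive, so that ``low-complexity'' in \autoref{lem:c_reduc_and_tfnp} is met. I would handle both by directly unwinding \autoref{def:reduced_pb}. Totality of $P$ follows because for any $x$, an $R$-solution $b$ of $f(x)$ together with $y=g_b(x)$ is a $P$-solution.
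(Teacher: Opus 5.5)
Your proposal is correct and follows the paper's argument: you argue by contrapositive, factor the efficient $R$-formulation via Lemma~\ref{lem:factorization} into a weakening $Q\to R(f,g)$ and a counter-example reduction $R(f,g)\to R$, and lift the latter with Lemma~\ref{lem:c_reduc_and_tfnp} to obtain $\mathrm{Checkable}(R(f,g))\leq_{dt}\mathrm{Checkable}(R)$. The paper states this in a single line; your extra checks are details it leaves implicit, namely that $R(f,g)\in\TF\Sigma_2^{dt}$ and that the counter-example formulation $f'=f$, $g'_b=(g_b,b)$, $h'_{b,c}=c$ keeps polylogarithmic complexity.
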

In light of this, if $P$ is a weakening of $Q$, then it is natural to ask whether there is anything that we can infer about the relationship between $\mathrm{Checkable}(Q)$ and $\mathrm{Checkable}(P)$?
Heuristically, to obtain intuition as to whether $Q$ reduces to $R$ it has been useful to look at the relationship between $\mathrm{Checkable}(Q)$ and $\mathrm{Checkable}(R)$ in $\TFNP^{dt}$.
 This question asks to what degree this heuristic can be formalized. 

Decision-tree reductions between problems in $\TFNP$ and proofs in certain polynomial-time verifiable proof systems are tightly connected. Counter-example reductions are a polynomial-time verifiable way to relate search problems in higher levels of the polynomial hierarchy. We ask whether one can also obtain characterizations of counter-example reductions to $\TF\Sigma_2^{dt}$ problems by proof systems. Furthermore, if this is the case, how do these proof systems compare to the ones obtained by first herbrandizing the $\TF\Sigma_2$ problem $R$ to a $\TFNP$ problem $\mathrm{Checkable}(R)$ and then taking its corresponding proof system via \cite{BussFI23}.

\section{Acknowledgements}
We thank the anonymous referees for helpful comments.\\
Noah Fleming was supported by the Swedish Research Council under grant number 2025-06762.

\bibliographystyle{alpha}
\bibliography{biblio}
\end{document}